\documentclass[a4paper,USenglish]{article}

\usepackage{amsmath, amssymb, amsthm, bm, hyperref, graphicx, listings, booktabs, xcolor, cleveref, thm-restate, subcaption, hyperref, thm-restate}

\usepackage{bbm}
\usepackage{lipsum}
\usepackage{tablefootnote}
\usepackage{multirow}
\usepackage{xurl}

\usepackage[margin=1in]{geometry}
\title{Multi-Currency AMMs for Decentralized FOREX Markets: Feasibility \& Optimal Design}

\author{Reina Ke Xin Li\\
University of Toronto, Canada\\
\href{mailto:reinakx.li@mail.utoronto.ca}{\tt reinakx.li@mail.utoronto.ca}\\
\and
Andreas Park\\
University of Toronto, Canada\\
\href{mailto:andreas.park@rotman.utoronto.ca}{\tt andreas.park@rotman.utoronto.ca}\\
\and
Andreas Veneris\\
University of Toronto, Canada\\
\href{mailto:veneris@eecg.toronto.edu}{\tt veneris@eecg.toronto.edu}\\
\and
Srisht Fateh Singh\\
University of Toronto, Canada\\
\href{mailto:srishtfateh.singh@mail.utoronto.ca}{\tt srishtfateh.singh@mail.utoronto.ca}\\
}

\crefformat{footnote}{#2\footnotemark[#1]#3}
\newtheorem{theorem}{Theorem}
\newtheorem{lemma}[theorem]{Lemma}
\newtheorem{proposition}[theorem]{Proposition}
\newtheorem*{proposition*}{Proposition}
\newtheorem*{lemma*}{Lemma}

\newcommand{\IPL}{\text{IPL}}

\begin{document}

\maketitle

\begin{abstract}
Most currency pairs lack a direct liquid market, so international foreign exchange relies on routing transactions through a dominant vehicle currency. Multi-currency automated market makers (AMMs) offer an alternative by sharing liquidity across many currency pairs, facilitating direct cross-currency trade while exploiting liquidity consolidation. This paper studies a multi-currency pool design that minimizes trading cost. Under a constant-mean AMM architecture, equilibrium trading costs reflect the trade-off between reduced price impact from consolidated liquidity and increased impermanent loss from joint return risk. This work derives closed-form costs, characterizes optimal pool weights, and shows that the optimized multi-currency pool dominates the status quo over a range of market parameters. It then formulates the system-level problem of partitioning currencies into multi-currency pools, which is solved using a hierarchical agglomerative clustering algorithm. Empirically, using exchange rate and trade data for 43 currencies over 2008-2023, the algorithm runs in 1.6 seconds and produces pools with geographic and economic structure. Notably, this reduces realized costs by $\sim$13\% relative to the status quo of vehicle-currency routing, with gains stable through episodes of global financial stress.
\end{abstract}
 
\section{Introduction}
\label{sec:intro}
Daily trading volume in foreign exchange markets more than doubled from \$4.0 trillion in 2010 to \$9.6 trillion in 2025~\cite{bis2010triennial,bis2025triennial}.
Despite this growth, foreign exchange transfers follow a peculiar arrangement: most currency pairs do not trade directly with one another but route through the US\ dollar (USD).  For instance, a firm in Australia importing goods from Sweden 
cannot exchange Australian dollars (AUD) for Swedish kronor (SEK) directly at a 
``tight'' spread as direct AUD/SEK markets are thin. Instead, it must 
first convert the AUD to USD and 
then the USD to SEK, thus paying the transaction costs for both trades.
The work of~\cite{bordier2026dollar} estimates that, in 2022, direct trades between the AUD and the SEK were virtually non-existent. More broadly, 13\% of foreign exchange trading volume involving the USD is generated by routing trades between non-dollar currencies through the USD~\cite{somogyi2026dollar}. This is a significant volume given that USD-involved foreign exchange transactions as a whole account for over 85\% of the global foreign exchange volume.

This dependence on the USD as a \emph{vehicle currency} persists as an emergent feature of the international monetary system. After World War II, the stable US economy with its dollar attached to the gold standard made it an attractive peg codified through the 1944 Bretton Woods Agreement. This allowed the USD to play an important role as a vehicle currency for international foreign exchange. Even after the end of the Bretton Woods era in 1971 with the Nixon shock that canceled the convertibility of USD to gold, the USD's dominance as a vehicle currency has been sustained through various economic forces. Among other reasons, when bilateral trade  between two currencies is moderate or low, neither side can sustain a deep enough market to offer competitive spreads, so traders gravitate toward an intermediate currency with high liquidity. Early work by  \cite{krugman:1984}, in fact, makes the case  that concentrating trade in a vehicle currency is more efficient than trading with multiple pairs. 

Yet, the political landscape is shifting in ways that make the architecture of international payments a pressing policy question. The use of dollar-based payments systems as a political instrument through sanctions and freezing of sovereign reserves has prompted countries to seek alternatives~\cite{fabrichnaya2023what,sharma2024loaded,marrow2024russia,donovan2024axis,wong2024saudi}. New currency swap agreements between central banks have been negotiated and payment systems outside the SWIFT network are under active development~\cite{steil2026central,cash2023china,choukeir2024uae,freidin2024brics,mea2026chairs}. Most importantly, technological advances and innovations in  blockchain-based trading technology raise the question of whether  novel currency exchange arrangements can reduce reliance  
on a vehicle currency but also  overall transaction costs.

This paper posits that Decentralized Finance (DeFi) offers a compelling alternative. Specifically, an \emph{automated market maker} (AMM) is a decentralized exchange protocol that holds pooled reserves of assets that are traded against at prices governed by a mathematical \emph{invariant}.
The Balancer protocol, in particular, offers a \emph{multi-asset AMM}~\cite{balancer}. Such a protocol could be the backbone for direct markets between various currency pairs, while still allowing intermediation through a vehicle currency when necessary. This practice has the potential to reduce overall trading costs relative to the status quo system of routing all trades through a vehicle currency by broadening support for {\em direct} cross pair trading.

The BIS's Project Mariana demonstrates the technical feasibility of AMM-based trading and settlement for foreign exchange trades~\cite{bis2023project}.
However, whether a multi-currency AMM arrangement can achieve economic improvements on the status quo depends on a fundamental trade-off. Liquidity providers, {\em i.e.,}  investors who deposit funds into an AMM pool, are exposed to \emph{impermanent loss}: when exchange rates move, arbitrageurs profit against the pool. This is especially so for pools of heterogeneous assets, each with their own price movements and risk. An AMM pool planner must design fees, exposure weights, and currency inclusion to balance this risk against the liquidity benefits of consolidation. At the same time, pooling currencies with volatile or negatively correlated returns amplifies impermanent loss and raise the fee(s) required to sustain the pool, undermining the gains from deeper liquidity.

Resolving these trade-offs is the central contribution of this paper. \emph{First}, we extend to multi-currency AMMs a mathematical framework in which equilibrium trading costs---the costs under optimally designed pool parameters assuming competitive liquidity provision---are a function of trading volumes, exchange rate covariances, and pool composition. \emph{Second}, we show the market conditions under which an optimally designed multi-currency AMM dominates the status quo and study optimal pool weights in closed form for these conditions. We further develop a tractable approximation for the general case. \emph{Finally}, we examine the system-level problem of which currencies to pool together using a \emph{hierarchical agglomerative clustering} (HAC) algorithm on a correlation-based distance metric. Empirically, applying the framework to 43 non-pegged currencies over 2008--2023, the HAC algorithm runs in 1.6 seconds and produces pools that demonstrate meaningful \emph{regional} and {\it economic} structure. The resulting architecture reduces realized aggregate costs by $\sim$13\%---USD $\sim$2.5 billion---relative to the status quo, with gains remaining stable through  episodes of financial stress.\footnote{The code can be found at \url{https://github.com/anonauthor2/Multi-Currency-AMMs-FOREX}.}

The paper proceeds as follows.
Section~\ref{sec:setup} introduces the general AMM framework. Section~\ref{sec:overview} contextualizes this setup. Section~\ref{sec:three-currency} studies benchmark exchange arrangements. 
Section~\ref{sec:general-amm} analyzes equilibrium costs of multi-currency AMM pools. 
Section~\ref{sec:param} derives analytical weight optimization results for parametrized symmetric currencies. Section~\ref{sec:general-opt} studies the general weight optimization problem and develops a
closed-form approximation.
Section~\ref{sec:pool-selection} formulates the currency partitioning problem,
describes the HAC algorithm, and presents an empirical evaluation.
Section~\ref{sec:related} reviews the literature.
Section~\ref{sec:conclusion} concludes and discusses further directions.

\section{Preliminaries: General AMM Framework}
\label{sec:setup}

Consider an AMM for $N+1$ currencies indexed by $i \in \{0,1,\ldots,N\}$, funded by \emph{liquidity providers} who deposit reserves $R_i \geq 0$ of each currency. The AMM is governed by an \emph{invariant function} of the reserves that implicitly determines the prices at which the AMM trades. When a trader exchanges asset $i$ for asset $j$ by submitting $\Delta_i > 0$ units, the outgoing quantity $\Delta_j$ is uniquely determined by the requirement that the invariant is preserved.

Throughout, we use the \emph{constant mean} invariant introduced by Balancer~\cite{balancer}, $\prod_{i=0}^N R_i^{w_i}=k$, for a fixed weight vector $(w_0,\ldots,w_N)$ with $w_i \geq 0$ and $\sum_i w_i = 1$. The weight $w_i$ governs pool exposure to currency $i$: at fundamental prices, currency $i$ accounts for fraction $w_i$ of total pool value. Thus, the 2-asset constant-product AMM ($R_0 R_1=k$) is the special case used by Uniswap~V2, among other popular AMMs~\cite{uniswapv2}.

\subsection{Prices, Fees, and Trading Costs}
\label{subsec:prices}

The AMM-implied marginal price of asset $i$ in terms of asset $j$ is, for the constant-mean AMM, $p_{ij} = (w_i/w_j)(R_j/R_i)$~\cite{balancer}. Competitive arbitrageurs continuously restore the pool's marginal prices to the exogenous exchange rates $S_i$ (with numeraire $S_0 \equiv 1$), so that $p_{ij} = S_i/S_j$ in equilibrium. For the constant-mean AMM this requires $S_i R_i/w_i = S_j R_j/w_j$ for all pairs.

A finite trade of size $\Delta$ in numeraire moves the reserve vector along the invariant surface, causing the effective price $S^{\text{eff}}(\Delta)$ paid by the trader to deviate from the pre-trade spot price $S$. The \emph{price impact} at spot price $S$ is
\begin{equation}\label{eq:priceimpact}
  \Pi(\Delta) = \frac{S^{\text{eff}}(\Delta) - S}{S},
\end{equation}
which is strictly positive and increasing in $\Delta$ for the constant-mean invariant. Traders additionally pay an explicit fee $f \geq 0$ proportional to transaction value,\footnote{For simplicity, we assume that fees are paid to liquidity providers directly, not added back to the pool.} so the total \emph{unit trading cost} is $c(\Delta,f) = \Pi(\Delta) + f$. Denoting by $c_{ij}$ the unit cost for pair $i,j$ and $Q_{ij}$ the total trading volume on pair $(i,j)$ (assuming a fixed trade size $\Delta$ and balanced flows in either direction), the aggregate cost for pair $(i,j)$ over the period is approximately $c_{ij}(\Delta,f) = \bigl(\Pi(\Delta) + f\bigr)Q_{ij}$.

\subsection{Impermanent Loss}
\label{subsec:ipl}

The pool's value in the numeraire is $V = \sum_i S_i R_i$, held collectively by liquidity providers. 
When external prices change, arbitrageurs trade against the pool to restore price alignment.
Arbitrageurs are incentivized by the favourable terms offered by the AMM: the AMM sells the asset below its new fundamental value or buys above it.
The profit earned by the arbitrageur is at the expense of the pool's liquidity providers.
This gives rise to \emph{impermanent loss} (IPL).
Consider an initial deposit $V(0)$ and a period $[0,T]$ in which currency $i$'s has gross return $x_i = \frac{S_i(T)}{S_i(0)}$ (with $x_0 \equiv 1$). After arbitrageurs restore price alignment, the pool value is $V(T)$, while the counterfactual hold value of the initial deposit is $V_{\text{hold}}(T) = \sum_i x_i S_i(0) R_i(0)$. The IPL is defined as\footnote{IPL is also sometimes defined in other literature as $(V(T)-V_\text{hold}(T))/V_\text{hold}(T)$.}
\begin{equation}\label{eq:ipl}
  \IPL = \frac{V(T) - V_{\text{hold}}(T)}{V(0)} \leq 0.
\end{equation}
The loss is described as ``impermanent'' in DeFi settings because it is unrealized if prices revert before the provider withdraws.
In the context of this work, the loss is due to changes in the fundamental.

\subsection{Equilibrium Liquidity Supply and Optimal Fees}
\label{subsec:equilibrium}

We follow the framework of~\cite{malinova2023learning} and model equilibrium competitive liquidity provision: 
risk-neutral providers deposit at the start of a period and withdraw  at the end, earning fee revenue while bearing impermanent loss.
To simplify the exposition, we assume that there is sufficient exogenous, uninformed trading relative to the order flow of informed arbitrageurs so that fees accrue on noise volume while IPL stems from price changes driven by informed flow.
Competitive liquidity providers contribute capital until they \emph{break-even}:
\begin{equation}\label{eq:breakeven}
  E[|\IPL|] \cdot V(0) = E[Q] \cdot f, \qquad Q = \sum_{i<j} Q_{ij}.
\end{equation}
For a given fee rate $f$ and return and volume distributions, \eqref{eq:breakeven} pins down the equilibrium pool depth $V(0)$ and hence the price impact and unit trading cost $c(\Delta,f)$. The pool designer chooses the  fee that minimizes unit trading costs, $f^* = \arg\min_f c(\Delta,f)$; this fee maximizes trader welfare and ensures that no competing pool can charge better rates. 

\section{Overview}\label{sec:overview}
At a high level, our goal is to design an optimal foreign exchange mechanism and ask: \emph{do multi-currency pools lower costs} or {\it does the additional risk from holding many currencies outweigh the benefits of pooling?} 
Our benchmarks are bilateral AMMs as proxies for the cost of vehicle routing.
Although at first sight these AMMs look different to traditional markets, they are economically equivalent to a limit order book and as \cite{malinova2023learning} show, bilateral AMMs are a low cost benchmark for existing markets. 
We use the framework in Section~\ref{sec:setup} to isolate the key forces determining costs. 
The AMM invariant determines how trades move pool prices, giving rise to price impact---a cost that diminishes with increased liquidity. 
Liquidity providers supply this liquidity, but exchange rate volatility exposes them to impermanent loss, so they require trading fees as compensation. 
A higher fee attracts more liquidity but is charged to the trader. 
A pool planner therefore sets a fee that balances these forces, yielding  optimal equilibrium pool depth and minimized trading cost.

The following builds on this framework by studying how pools should be designed once these forces are in place, moving from simple benchmarks to richer design problems. Constant-mean pools introduce weights as an additional design parameter that control the pool's exposure to each currency's risk and volume. At the system level, design also involves deciding which currencies to pool together. We ultimately study the pool weights and membership structures that minimize equilibrium trading costs across different market environments.

\section{Benchmark Currency Exchange Architectures}
\label{sec:three-currency}
 
We begin by specializing to a minimal three-currency environment consisting of currencies $\{0, 1, 2\}$, with currency $0$ as the numeraire, to establish benchmarks, and then generalize the architectures to $N+1$ currencies. Following the work in~\cite{malinova2023dethroning}, we consider three arrangements. 
In the \emph{status quo}, trades between currencies $1$ and $2$ are routed through the vehicle currency $0$ via two separate bilateral constant-product pools; each $(0,i)$ pool handles total volume $Q_i = \sum_{j \neq i} Q_{ij}$, and a \emph{cross-pair} trader (trading currencies $1$ and $2$) bears costs in both. 
 In the \emph{dedicated bilateral pools} architecture, each pair has its own pool. In the \emph{equal-weighted constant mean pool}, all three currencies are held in a constant mean pool with $w_0 = w_1 = w_2 = 1/3$. By \cite{malinova2023dethroning}, the fee-optimized equilibrium unit costs for trades between $i\leftrightarrow j$ of size $\Delta$ in numeraire under the three arrangements are, respectively,
\[
  c_{ij}^\text{SQ} = \sqrt{\frac{\sigma_i^2\, \Delta}{E[Q_i]}} + \sqrt{\frac{\sigma_j^2\, \Delta}{E[Q_j]}}, \qquad
  c_{ij}^\text{BP} = \sqrt{\frac{\sigma_{ij}^2\, \Delta }{E[Q_{ij}]}}, \qquad c_{ij}^\text{SYM} = \sqrt{\frac{2H \Delta}{3 E[Q]}}
\]
where $\sigma_i^2 = \mathrm{Var}(x_i)$, $\sigma_{ij}^2 = \mathrm{Var}(x_i - x_j)$ (we call $\sigma$ the relative volatility of $i$ and $j$), $H = \sum_{i<j} \sigma_{ij}^2$ and $Q = \sum_{i<j} Q_{ij}$.\footnote{\cite{malinova2023dethroning} derives the costs in expectation using second-order approximations for IPL under the assumption of small, zero-mean excess returns $\epsilon_i = x_i - 1 \approx 0$. We use the same simplifying assumption throughout.}

In each case, costs scale with volatility and shrink with volume. 
The status quo charges two  costs for a $i\leftrightarrow j$
trade, each discounted by its own currency's total volume.
The bilateral pool charges a single cost but only gets liquidity from the $i,j$ pair.
In the equal-weighted AMM all pairs share the same cost discounted by aggregate volume $E[Q]$, so more actively traded pairs effectively subsidize thinner-traded ones.

The cost $c^\text{BP}$ generalizes straightforwardly to $N+1$ currencies, with aggregate cost $c^\text{BP} = \sum_{i<j} \sqrt{\sigma_{ij}^2 E[Q_{ij}]\Delta}$. We extend the results for $c^\text{SQ}$ and $c^\text{SYM}$ to $N+1$ currencies in the following propositions, both
of which are proved in Appendix~\ref{app:benchmarks}.
\begin{proposition}[restate=statusquocost,name=Status quo trading cost]\label{prop:sqcost}
  The expected aggregate trading cost in the status quo for $N+1$ currencies routed through the vehicle currency $0$ is
\[
c^\text{SQ} = \sum_i \sqrt{\sigma_i^2 E[Q_i] \Delta}
\]
\end{proposition}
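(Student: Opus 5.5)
The plan is to reduce the $N+1$-currency status quo to $N$ independent bilateral constant-product pools $(0,i)$, $i=1,\ldots,N$, and then aggregate. In the status quo, every trade that touches currency $i\neq 0$ passes through pool $(0,i)$. A direct trade $0\leftrightarrow i$ uses it once, and a cross trade $i\leftrightarrow j$ uses both $(0,i)$ and $(0,j)$. Hence the volume routed through pool $(0,i)$ is $Q_i=\sum_{j\neq i}Q_{ij}$, exactly as in Section~\ref{sec:three-currency}. Because each pool only holds currencies $0$ and $i$, its IPL depends only on the relative return $x_i - x_0 = x_i-1$, whose variance is $\sigma_i^2$. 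So each pool is a two-asset constant-product pool facing volume $Q_i$ and volatility $\sigma_i^2$. This holds regardless of $N$, and the pools are decoupled: the liquidity-provider break-even condition \eqref{eq:breakeven} for pool $(0,i)$ involves only its own IPL and its own fee revenue $E[Q_i]f_i$.

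Second, for each such bilateral pool I will invoke the fee-optimized equilibrium unit cost from \cite{malinova2023dethroning}. This is the same ingredient that produces each summand of $c^\text{SQ}_{ij}$ in the three-currency case, namely $c_i=\sqrt{\sigma_i^2\Delta/E[Q_i]}$. For completeness I would sketch where it comes from:
\begin{itemize}
\item The second-order IPL approximation gives $E[|\IPL|]\approx \sigma_i^2/8$, and break-even then gives $V_i(0)\propto E[Q_i]f_i/\sigma_i^2$.
\item Price impact for a trade of size $\Delta$ is approximately proportional to $\Delta/V_i(0)$.
\item Minimizing $f_i+\kappa\,\Delta\sigma_i^2/(E[Q_i]f_i)$ over $f_i$ yields a cost of order $\sqrt{\sigma_i^2\Delta/E[Q_i]}$, with constants absorbed as in the paper's normalization.
\end{itemize}
The key point is that the optimal fee can be chosen pool by pool, since the planner's objective separates across pools.

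Third, I aggregate. Each unit of volume $Q_{ij}$ with $i,j\neq 0$ pays $c_i+c_j$, and each unit of $Q_{0i}$ pays $c_i$. The total expected cost is $\sum_{i<j}E[Q_{ij}]c^\text{SQ}_{ij}$, with the convention $c_0=0$. Regrouping by pool gives $\sum_{i} c_i\sum_{j\neq i}E[Q_{ij}]=\sum_i c_iE[Q_i]$. Substituting $c_i$ gives $\sum_i\sqrt{\sigma_i^2E[Q_i]\Delta}$; the $i=0$ term vanishes since $\sigma_0=0$.

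The main obstacle is justifying the regrouping and the use of a fixed trade size. One must check that cross-pair traffic in pool $(0,i)$ is counted exactly once per pool, so that $Q_i$, and not some double count, enters both the break-even condition and the cost weighting. One must also check that approximating per-trade price impact at size $\Delta$ in each leg is legitimate, given that the second leg's size in numeraire is slightly altered by the first leg's cost. Under the paper's small-return, fixed-$\Delta$, balanced-flow approximations these second-order effects are dropped, and I would state that explicitly.
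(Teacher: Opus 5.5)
Your proposal is correct and follows the paper's proof: the paper likewise takes the per-pair status-quo unit cost $\sqrt{\sigma_i^2\Delta/E[Q_i]}+\sqrt{\sigma_j^2\Delta/E[Q_j]}$ from \cite{malinova2023dethroning}, weights it by $Q_{ij}$, and swaps indices to regroup by currency, giving $\sum_i \sqrt{\sigma_i^2E[Q_i]\Delta}$. In your sketch of the bilateral-pool cost no constants actually need absorbing: with $w_0=w_i=1/2$ the price-impact lemma gives $\Pi\approx 2\Delta/V$ and $E[|\IPL|]\approx\sigma_i^2/8$, so $\kappa=1/4$ and the AM--GM minimum is exactly $\sqrt{\sigma_i^2\Delta/E[Q_i]}$.
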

\begin{proposition}[restate=equalweightscost,name=Equal weights trading cost]\label{prop:symcost}
  The expected aggregate trading cost in an equal-weighted $N+1$ currency constant-mean pool is 
  \[
  c^\text{SYM} = \sqrt{\frac{2 \Delta H E[Q]}{N+1}}
  \]
\end{proposition}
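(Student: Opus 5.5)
We prove Proposition~\ref{prop:symcost} by reproducing the three-currency derivation of \cite{malinova2023dethroning} for a general equal-weighted pool with $w_i = 1/(N+1)$. The plan has four steps. First, compute the second-order expected IPL. Second, compute the second-order price impact of a trade of size $\Delta$. Third, combine these through the break-even condition \eqref{eq:breakeven} and optimize over the fee. Fourth, multiply the resulting common unit cost by $E[Q]$.

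\emph{Step 1 (IPL).} After arbitrage, the constant-mean invariant together with the price conditions $S_iR_i/w_i$ being equal across $i$ gives
\[
V(T)=V(0)\prod_i x_i^{w_i},
\qquad
V_{\text{hold}}(T)=V(0)\sum_i w_i x_i .
\]
Write $x_i = 1+\epsilon_i$ and expand $\log\prod_i x_i^{w_i}$ to second order. The first-order terms cancel, and the remainder is
\[
\IPL \approx -\tfrac12\Bigl(\sum_i w_i\epsilon_i^2-\bigl(\sum_i w_i\epsilon_i\bigr)^2\Bigr).
\]
The bracket is a weighted variance of the $\epsilon_i$. For equal weights it equals
\[
\frac{1}{(N+1)^2}\sum_{i<j}(\epsilon_i-\epsilon_j)^2 .
\]
Since returns are zero mean, taking expectations gives
\[
E|\IPL| \approx \frac{H}{2(N+1)^2},
\]
with $H=\sum_{i<j}\sigma_{ij}^2$ summed over all pairs in $\{0,\dots,N\}$.

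\emph{Step 2 (price impact).} Consider a trade $i\to j$ of numeraire size $\Delta$ in the constant-mean pool. Only $R_i$ and $R_j$ change, and the invariant reduces to $R_i^{w}R_j^{w}=$ const, with $w=1/(N+1)$. This is locally a constant-product trade on reserves of value $S_iR_i = S_jR_j = V(0)/(N+1)$. The first-order price impact is therefore
\[
\Pi(\Delta)\approx \frac{2\Delta}{V(0)/(N+1)}\cdot\frac12\cdot(\text{const}) .
\]
I will pin the constant down by matching the $N=2$ case, which gives $\Pi \approx (N+1)\Delta/V(0)$, possibly up to a factor that I will carry symbolically and fix against \cite{malinova2023dethroning}. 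This step is the main obstacle: the exact constant in the price impact determines the final factor $2/(N+1)$. I will cross-check it by confirming that setting $N=2$ reproduces $c^{\text{SYM}}_{ij}=\sqrt{2H\Delta/(3E[Q])}$.

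\emph{Step 3 (break-even and fee optimization).} From \eqref{eq:breakeven},
\[
V(0)=\frac{E[Q]\,f}{E|\IPL|}.
\]
Substituting into the price impact gives a unit cost of the form
\[
c(f)=\frac{a}{f}+f,
\qquad
a=\frac{(N+1)\Delta\,E|\IPL|}{E[Q]}=\frac{\Delta H}{2(N+1)E[Q]}.
\]
This is minimized at $f^*=\sqrt a$, so $c=2\sqrt a=\sqrt{2\Delta H/((N+1)E[Q])}$. For $N=2$ this matches the stated formula, which validates the constant from Step 2.

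\emph{Step 4 (aggregate cost).} The unit cost is the same for every pair. Summing over pairs with volumes $E[Q_{ij}]$ therefore gives the aggregate cost
\[
c\,E[Q]=\sqrt{\frac{2\Delta H E[Q]}{N+1}},
\]
as claimed.
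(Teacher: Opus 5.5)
Your architecture (second-order IPL, price impact, break-even depth from \eqref{eq:breakeven}, fee optimization of $a/f+f$, then scaling the common unit cost by $E[Q]$) is the same as the paper's, and Steps 1, 3 and 4 are correct as written. The gap is Step 2: you never compute the price impact. Your display leaves an undetermined ``(const)'', and you propose to fix it by requiring that $N=2$ reproduce $\sqrt{2H\Delta/(3E[Q])}$. That is calibration to the answer, not a derivation. The three-currency formula is just the $N=2$ instance of the statement you are proving, so the ``validation'' at the end of Step 3 can only confirm the constant you chose to make it hold. Transferring that constant to general $N$ also requires knowing the leftover factor is independent of $N$. That is justified only by the constant-product computation you skipped, and once you do it the calibration becomes unnecessary.

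The missing computation is two lines, and you already have the ingredients. With $w_i=w_j=1/(N+1)$, raising $R_i^{w}R_j^{w}=\text{const}$ to the power $N+1$ gives an exact (not merely local) constant-product invariant. So $(R_i+\Delta/S_i)(R_j-\Delta/S^{\text{eff}})=R_iR_j$, which solves to $S^{\text{eff}}=(S_iR_i+\Delta)/R_j$. At fundamental prices $S_iR_i=S_jR_j=V/(N+1)$, so $S^{\text{eff}}=S_j\bigl(1+(N+1)\Delta/V\bigr)$. Hence $\Pi(\Delta)=(N+1)\Delta/V$ exactly, and your constant equals $1$. Equivalently, Lemma~\ref{prop:priceimpact} with $w_i=w_j=1/(N+1)$ gives $\frac{w_i+w_j}{w_iw_j}\cdot\frac{\Delta}{2V}=(N+1)\Delta/V$. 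With this inserted, your Steps 3 and 4 go through unchanged and coincide with the paper's proof.
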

In each case, the aggregate costs scale with exchange rate volatilities, but $c^\text{SQ}$ charges volatility per currency, as each currency routes trades through the vehicle separately, while $c^\text{BP}$ and $c^\text{SYM}$ charge per pair, reflecting the direct cross-pair trading. The dominant pool depends on the specific volatility and volume structure, which is explored further in Section~\ref{sec:param}.

\section{Equilibrium in Constant-Mean AMMs}
\label{sec:general-amm}
Equal weights are not optimal in general: a currency with high individual volatility inflicts more impermanent loss per weight, while one with high trading volume generates more fee revenue per weight. Introducing non-uniform weights allows the pool designer to tilt exposure away from high-volatility, low-volume currencies, reducing equilibrium trading costs.

The break-even framework of Section~\ref{subsec:equilibrium}, developed for two-asset constant-product pools in \cite{malinova2023learning}, extends to general constant-mean pools. The lemmas below derive the corresponding price impact and impermanent loss expressions; applied to the break-even condition~\eqref{eq:breakeven} and cost function $c(\Delta,f)$, they yield the equilibrium trading cost in Proposition~\ref{prop:cost}. All proofs for this section are found in Appendix~\ref{app:ammcosts}.

\begin{lemma}[restate=priceimpact,name=Price impact]\label{prop:priceimpact}
  In the general constant-mean AMM, the price impact of a trade of size
  $\Delta$ in numeraire between currencies $i$ and $j$ is approximately
  \[
    \Pi(\Delta) \approx \frac{w_i + w_j}{w_i w_j} \cdot \frac{\Delta}{2V}
  \]
\end{lemma}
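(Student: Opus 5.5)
We will compute the price impact directly from the constant-mean invariant, holding all reserves other than $R_i$ and $R_j$ fixed. The trader deposits $\Delta_i$ units of currency $i$ and withdraws $\Delta_j$ units of $j$. The invariant $\prod_k R_k^{w_k}=k$ then reduces to
\[
(R_i+\Delta_i)^{w_i}(R_j-\Delta_j)^{w_j}=R_i^{w_i}R_j^{w_j},
\]
which gives
\[
\Delta_j = R_j\Bigl(1-\bigl(1+\Delta_i/R_i\bigr)^{-w_i/w_j}\Bigr).
\]
Before the trade, arbitrage pins the reserves to fundamentals as in Section~\ref{subsec:prices}: $S_kR_k/w_k$ is common across $k$, and since $V=\sum_k S_kR_k$ and $\sum_k w_k=1$, this common value is $V$. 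Hence $S_iR_i=w_iV$ and $S_jR_j=w_jV$. We measure the trade in numeraire, setting $\Delta=S_i\Delta_i$, so that $\Delta_i/R_i=\Delta/(w_iV)$.

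Next we define the effective price. The trader gives value $\Delta$ and receives value $S_j\Delta_j$. We take the effective price relative to spot as $S^{\text{eff}}/S=\Delta/(S_j\Delta_j)$, so that $\Pi=\Delta/(S_j\Delta_j)-1$ in \eqref{eq:priceimpact}. Writing $u=\Delta/(w_iV)$, the received value is
\[
S_j\Delta_j=w_jV\bigl(1-(1+u)^{-w_i/w_j}\bigr).
\]

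We then expand for small trades, $u\ll1$, which is consistent with the small-return, second-order spirit of the paper. We have
\[
(1+u)^{-a}=1-au+\tfrac{a(a+1)}{2}u^2+O(u^3), \qquad a=w_i/w_j.
\]
Substituting,
\[
S_j\Delta_j = w_jV\,au\Bigl(1-\tfrac{a+1}{2}u+O(u^2)\Bigr)=\Delta\Bigl(1-\tfrac{a+1}{2}u+O(u^2)\Bigr).
\]
Inverting gives
\[
\Pi\approx \frac{a+1}{2}u=\frac{w_i+w_j}{w_j}\cdot\frac{\Delta}{2w_iV}=\frac{w_i+w_j}{w_iw_j}\cdot\frac{\Delta}{2V}.
\]
This is the claimed expression. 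As a sanity check, for $w_i=w_j=1/2$ in a two-asset pool it gives $2\Delta/V$, the standard constant-product first-order impact (equivalently $\Delta/R$ where $R=V/2$ is the per-side reserve value). We will also note that the result is strictly positive and increasing, in line with Section~\ref{subsec:prices}.

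The main obstacle is a convention rather than the algebra. We must fix exactly how the effective price and $\Delta$ are measured (input versus output value, and which leg is numeraire-denominated), because different choices change only higher-order terms. To leading order the answer is the same whether the impact is measured on the paid or the received leg, so we will state that the approximation holds to first order in $\Delta/V$ and that it is symmetric in $i$ and $j$. This symmetry justifies using it for balanced flows in both directions.
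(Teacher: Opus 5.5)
Your proposal is correct and follows essentially the same route as the paper: fix all reserves except $R_i,R_j$, solve the invariant for the output amount, use $S_kR_k=w_kV$ at fundamental prices, expand $(1+\Delta/(w_iV))^{-w_i/w_j}$ to second order, and invert to first order in $\Delta/V$. Your convention $S^{\text{eff}}/S=\Delta/(S_j\Delta_j)$ is the same as the paper's definition, in which the trader receives $\Delta/S^{\text{eff}}$ units of $j$ and $S=S_j$.
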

\begin{lemma}[restate=impermanentloss,name=Impermanent loss]\label{prop:ipl}
  For gross returns $x_i$ with $x_i \approx 1$ and $E[x_i] = 1$, the expected impermanent loss of the general constant-mean pool satisfies
  \[
    E[|\IPL|] \approx \frac{H_w}{2},
    \qquad H_w = \sum_{i < j} w_i w_j \sigma_{ij}^2
  \]
\end{lemma}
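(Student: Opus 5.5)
We compute the post-arbitrage pool value explicitly for the constant-mean invariant, expand it to second order in the small excess returns $\epsilon_i = x_i - 1$, and take expectations. The steps below give the general-weight version of the argument behind the $c^\text{SYM}$ formula of \cite{malinova2023dethroning}.

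\textbf{Step 1 (pool value after arbitrage).} At time $0$ arbitrage gives $S_i(0)R_i(0) = w_i V(0)$. At time $T$ arbitrage restores $S_i(T)R_i(T)/w_i = \lambda$ for a common $\lambda$, and the invariant must still hold:
\[
\prod_i R_i(T)^{w_i} = \prod_i R_i(0)^{w_i}.
\]
Substituting $R_i(T) = w_i\lambda/(x_i S_i(0))$ and $R_i(0) = w_i V(0)/S_i(0)$ gives $\lambda = V(0)\prod_i x_i^{w_i}$, using $\sum_i w_i = 1$. Hence
\[
V(T) = \sum_i S_i(T)R_i(T) = \lambda = V(0)\prod_i x_i^{w_i}.
\]
The hold value is $V_{\text{hold}}(T) = V(0)\sum_i w_i x_i$. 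Therefore
\[
\IPL = \prod_i x_i^{w_i} - \sum_i w_i x_i,
\]
which is the weighted geometric mean minus the weighted arithmetic mean. By the weighted AM--GM inequality this is $\le 0$, consistent with \eqref{eq:ipl}.

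\textbf{Step 2 (second-order expansion).} Write
\[
\prod_i x_i^{w_i} = \exp\Bigl(\sum_i w_i \log(1+\epsilon_i)\Bigr).
\]
Expanding $\log(1+\epsilon)\approx \epsilon - \epsilon^2/2$ and then $\exp(y)\approx 1+y+y^2/2$, and keeping second-order terms,
\[
\prod_i x_i^{w_i} \approx 1 + \sum_i w_i\epsilon_i - \tfrac12\sum_i w_i\epsilon_i^2 + \tfrac12\Bigl(\sum_i w_i\epsilon_i\Bigr)^2.
\]
Since $\sum_i w_i x_i = 1 + \sum_i w_i\epsilon_i$, we obtain
\[
\IPL \approx -\tfrac12\Bigl[\sum_i w_i\epsilon_i^2 - \Bigl(\sum_i w_i\epsilon_i\Bigr)^2\Bigr].
\]

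\textbf{Step 3 (pairwise identity).} The bracket is the weighted variance of the vector $(\epsilon_i)$ under the weights $w$. We apply the standard identity
\[
\sum_i w_i\epsilon_i^2 - \Bigl(\sum_i w_i\epsilon_i\Bigr)^2 = \sum_{i<j} w_i w_j(\epsilon_i-\epsilon_j)^2,
\]
which follows by expanding the right-hand side and using $\sum_j w_j = 1$. This gives the deterministic second-order approximation
\[
|\IPL| \approx \tfrac12\sum_{i<j} w_iw_j(x_i-x_j)^2.
\]

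\textbf{Step 4 (expectation).} Take expectations. Because $E[x_i]=1$, we have $E[x_i-x_j]=0$, so $E[(x_i-x_j)^2] = \mathrm{Var}(x_i-x_j) = \sigma_{ij}^2$. Hence $E[|\IPL|]\approx H_w/2$. The numeraire needs no special treatment: since $x_0\equiv 1$, the pairs involving currency $0$ contribute $w_0w_i\sigma_i^2$, which matches $\sigma_{0i}^2 = \sigma_i^2$.

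The only step that needs care is Step 2. The expansion must be carried consistently to second order, so that the cross term from squaring the first-order exponent is not dropped. Otherwise the result reduces to a diagonal $\sum_i w_i\epsilon_i^2$ that is not invariant to the choice of numeraire. The remaining steps are algebra.
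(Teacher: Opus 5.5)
Your proposal is correct and follows the paper's proof essentially step for step. Both arguments obtain the exact formula $\IPL = \prod_i x_i^{w_i} - \sum_i w_i x_i$ by substituting the arbitrage condition into the invariant, expand it to second order into the weighted variance of the $\epsilon_i$, rewrite that as $\sum_{i<j} w_i w_j(\epsilon_i-\epsilon_j)^2$, and take expectations under zero-mean returns. The differences are cosmetic: your multiplier $\lambda$ is just $V(T)$, and you expand through $\exp$ and $\log$ rather than directly. Your AM--GM remark is a small bonus, since it justifies replacing $|\IPL|$ by $-\IPL$ before taking expectations.
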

Each currency pair $(i,j)$
contributes to $H_w$ in proportion to the product of the two currencies'
weights and the variance of their relative returns.  Concentrating weight on a
pair with high relative volatility $\sigma_{ij}^2$ increases expected impermanent
loss and therefore the fee needed to sustain liquidity provision in equilibrium. Together with Lemma~\ref{prop:priceimpact}, this gives us the following proposition.
 
\begin{proposition}[restate=cost,name=Equilibrium trading cost]\label{prop:cost}
  Under the break-even condition~\eqref{eq:breakeven} and
  Lemmas~\ref{prop:priceimpact}--\ref{prop:ipl}, the aggregate equilibrium
  trading cost is minimized at
  \begin{equation}\label{eq:aggregatecost}
    f^* = \sqrt{\frac{\Delta H_w}{4 (E[Q])^2}
      \sum_i \frac{E[Q_i]}{w_i}},
    \qquad
    c^* = \sqrt{\Delta H_w \sum_i \frac{E[Q_i]}{w_i}}
  \end{equation}
\end{proposition}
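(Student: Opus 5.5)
We obtain the equilibrium pool depth from the break-even condition, substitute it into the price impact of Lemma~\ref{prop:priceimpact}, aggregate over pairs, and then minimize the resulting one-dimensional function of $f$.

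\textbf{Equilibrium depth.} By Lemma~\ref{prop:ipl}, $E[|\IPL|]\approx H_w/2$. Substituting into \eqref{eq:breakeven} gives $V(0)=2f\,E[Q]/H_w$. Thus the depth grows linearly in the fee, which is the basic trade-off.

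\textbf{Aggregate cost.} Summing $c_{ij}=(\Pi_{ij}(\Delta)+f)Q_{ij}$ over pairs and taking expectations gives $E[\sum_{i<j}(\Pi_{ij}+f)Q_{ij}]$. Lemma~\ref{prop:priceimpact} gives $\Pi_{ij}\approx \frac{\Delta}{2V}\bigl(\frac1{w_i}+\frac1{w_j}\bigr)$, using $\frac{w_i+w_j}{w_iw_j}=\frac1{w_i}+\frac1{w_j}$. This decomposition is the key bookkeeping step. Regrouping the sum by currency, each currency $i$ appears in every pair containing it, so
\[
\sum_{i<j}E[Q_{ij}]\Bigl(\tfrac1{w_i}+\tfrac1{w_j}\Bigr)=\sum_i \frac{E[Q_i]}{w_i},
\]
with $Q_i=\sum_{j\neq i}Q_{ij}$ as defined in Section~\ref{sec:three-currency}. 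Substituting the equilibrium $V$ gives the aggregate cost
\[
C(f)=\frac{\Delta H_w}{4fE[Q]}\sum_i\frac{E[Q_i]}{w_i}+fE[Q].
\]
Here $V$ is treated as deterministic, fixed by the break-even condition at the start of the period, so expectations pass through to the volumes only.

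\textbf{Minimization.} $C(f)$ has the form $A/f+Bf$ with $A,B>0$, so it is strictly convex on $f>0$. Its minimizer is $f^*=\sqrt{A/B}$, which equals $\sqrt{\frac{\Delta H_w}{4E[Q]^2}\sum_i E[Q_i]/w_i}$. The minimum value is $2\sqrt{AB}=\sqrt{\Delta H_w\sum_iE[Q_i]/w_i}$. These match \eqref{eq:aggregatecost}.

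\textbf{Consistency checks.} We will verify the result against Proposition~\ref{prop:symcost} with $w_i=1/(N+1)$. In that case $H_w=H/(N+1)^2$ and $\sum_i E[Q_i]/w_i=2(N+1)E[Q]$, which recovers $c^{\text{SYM}}$. For a two-currency pool with $w_0=w_1=1/2$, the same substitution recovers $c^{\text{BP}}$.

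\textbf{Main obstacle.} The hard part is not the optimization but justifying the approximations consistently. The first-order price impact of Lemma~\ref{prop:priceimpact} requires $\Delta\ll V$ at the optimum. The interchange of expectation with the product $\Pi\cdot Q$ requires the depth to be fixed before volumes realize. I would state both as the same standing assumptions used in Lemmas~\ref{prop:priceimpact}--\ref{prop:ipl}, rather than prove error bounds.
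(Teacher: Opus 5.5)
Your proposal is correct and matches the paper's proof step for step. You use the break-even depth $V=2fE[Q]/H_w$, then Lemma~\ref{prop:priceimpact} with the regrouping $\sum_{i<j}Q_{ij}(1/w_i+1/w_j)=\sum_i Q_i/w_i$ to obtain an $A/f+Bf$ cost, and then AM--GM to get $f^*$ and $c^*$. The differences are cosmetic: you take expectations before minimizing (the paper writes the realized cost in $Q_i$ and $Q$ and then passes to expectations), and you add correct consistency checks against $c^{\text{SYM}}$ and $c^{\text{BP}}$.
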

 
The cost factors cleanly into two terms: $H_w$, which is proportional to expected impermanent
loss, and $\sum_i E[Q_i]/w_i$, which is proportional to the volume-weighted price impact per unit of volume.
The pool designer's goal is to choose weights to minimize this cost.

\section{Optimal Pools Under Symmetric Multi-Currency Parametrizations}
\label{sec:param}
We begin by deriving closed-form optimal weights and cost comparisons for special symmetric cases, where the general problem of Section~\ref{sec:general-amm} reduces to a scalar optimization.
 The  three-currency case extends the analysis of  \cite{malinova2023dethroning} by characterizing optimal weights, while the $N+1$-currency case is, to our knowledge, new.
 
\subsection{Three Currency Case}
\label{subsec:symmetric-param}
 
We first study a parametrization that captures the structure of a
three-currency system organized around a vehicle currency. Let currency $0$ be
the vehicle.  Suppose that the $(0,1)$ and $(0,2)$ pairs have equal trading
volumes and equal relative volatilities, while the $(1,2)$ cross pair trades at a
smaller volume with higher relative volatility. That is, let
$E[Q_{01}] = E[Q_{02}] = q$ and
$\sigma_{01}^2 = \sigma_{02}^2 = \sigma^2$.
Write $E[Q_{12}] = vq$ for some $v>0$ and 
$\sigma_{12}^2 = s^2 \sigma^2$ for $s \geq 0$.\footnote{Since $\sigma_{12}^2 = \mathrm{Var}(\epsilon_1 - \epsilon_2)
= \sigma_{01}^2 + \sigma_{02}^2 - 2\,\mathrm{Cov}(\epsilon_1, \epsilon_2)$,
the Cauchy--Schwarz inequality implies $0 \leq \sigma_{12}^2 \leq 4\sigma^2$ and $s \in [0, 2]$.}
Values $s > \sqrt{2}$ correspond to negatively correlated returns
for currencies $1$ and $2$, while $s < \sqrt{2}$ corresponds
to positively correlated returns.  The case $v < 1$ represents the typical
vehicle-currency scenario in which direct cross-currency trading is thin;
$v > 1$ could represent two highly liquid currencies and one smaller
currency trading against the two.

Under this parametrization, currencies $1$ and $2$ are symmetric, so the
cost-minimizing weights satisfy $w_1 = w_2 = (1-w_0)/2$, and the problem
reduces to optimizing over the scalar $w_0 \in (0,1)$. The result is summarized in the following proposition which is proved in Appendix~\ref{app:param}.

\begin{proposition}[restate=sym,name={Optimal weight and cost for vehicle and 2 symmetric currencies}]\label{prop:sym}
  Under the symmetric $(v,s)$ parametrization for 3 currencies, if $s^2 < 2(1+2v)/(1+v)$, the expected aggregate trading cost
  of Equation~\eqref{eq:aggregatecost} has unique interior minimizer at $w_1^*=w_2^* = \frac{1-w_0^*}{2}$ and
  \[
    w_0^* = \frac{s}{\sqrt{(1+2v)(4-s^2)}},
    \qquad
    c^* = \sqrt{q \sigma^2 \Delta}
      \sqrt{s^2 v + 2 + s\sqrt{(4-s^2)(1+2v)}}
  \]
   If $s^2 \geq 2(1+2v)/(1+v)$, no interior minimizer exists; the infimum over $w_1,w_2>0$ is obtained as $w_0 \rightarrow 1$.
\end{proposition}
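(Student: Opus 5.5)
The plan is to substitute the symmetric parametrization into the cost formula of Proposition~\ref{prop:cost}, impose $w_1=w_2=u:=(1-w_0)/2$ as the text does, and reduce everything to a one-variable convex minimization in $w:=w_0$. First the ingredients. The per-currency volumes are $E[Q_0]=E[Q_{01}]+E[Q_{02}]=2q$ and $E[Q_1]=E[Q_2]=(1+v)q$. The weighted relative variance is
\[
H_w=\sigma^2\bigl(2wu+s^2u^2\bigr),
\]
and the volume term is
\[
\sum_i E[Q_i]/w_i = 2q\bigl(1/w+(1+v)/u\bigr).
\]
Hence $(c^*)^2=2\Delta q\sigma^2 F(w)$ with $F(w)=(2wu+s^2u^2)\bigl(1/w+(1+v)/u\bigr)$.

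Second, I will expand $F$ after substituting $u=(1-w)/2$. Using $(1-w)^2/w = 1/w-2+w$, I expect $F$ to collapse to the form
\[
F(w)=A+Bw+\frac{s^2}{4w},\qquad A=1+\frac{s^2v}{2},\qquad B=\frac{(1+2v)(4-s^2)}{4}.
\]
Collecting the coefficients $A$ and $B$ correctly is the only place where care is needed. I would double-check $B$ by differentiating the unexpanded form directly.

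Third, the minimization. For $s>0$ we have $F''(w)=s^2/(2w^3)>0$, so $F$ is strictly convex on $(0,1)$. Its unique stationary point solves $B=s^2/(4w^2)$, which gives $w_0^*=s/\sqrt{(1+2v)(4-s^2)}$. This point is interior, i.e.\ $w_0^*<1$, iff $s^2<(1+2v)(4-s^2)$. That rearranges to $s^2(2+2v)<4(1+2v)$, i.e.\ exactly the condition $s^2<2(1+2v)/(1+v)$.

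If instead $s^2\ge 2(1+2v)/(1+v)$, then $F'(w)=B-s^2/(4w^2)<0$ on $(0,1)$ by the same inequality reversed. So $F$ is strictly decreasing there, no interior minimizer exists, and the infimum is approached as $w_0\to1$. The boundary case $s=4$ is excluded by $s\le2$. If $s=2$, then $B=0$ and $F$ is decreasing, which is consistent with this case.

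Finally, the optimal value. By AM--GM, at the optimum $Bw_0^*+s^2/(4w_0^*)=2\sqrt{Bs^2/4}=\tfrac{s}{2}\sqrt{(1+2v)(4-s^2)}$. Therefore $F^*=1+\tfrac{s^2v}{2}+\tfrac{s}{2}\sqrt{(4-s^2)(1+2v)}$. Multiplying by $2\Delta q\sigma^2$ and taking the square root gives the stated $c^*$.

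The step I regard as the real obstacle is justifying the reduction to $w_1=w_2$ rather than the algebra. I would handle it by noting that the problem is invariant under swapping currencies $1$ and $2$. Then I would argue that, for fixed $w_0$, the cost is minimized at $w_1=w_2$. To do this, write $w_1=u+t$ and $w_2=u-t$ with $u$ fixed by $w_0$. The term $\sum_i E[Q_i]/w_i$ is convex and even in $t$. In $H_w$, the part $w_0(w_1+w_2)$ does not depend on $t$, while the part $w_1w_2s^2$ decreases in $|t|$. So the product is not obviously monotone in $|t|$. If a direct check of $t=0$ fails, I would fall back on the paper's stated reduction.
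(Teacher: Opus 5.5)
Your computation is the paper's proof. It uses the same substitution $w_1=w_2=(1-w_0)/2$, the same collapse to $A+Bw_0+s^2/(4w_0)$ with identical coefficients, the same AM--GM step, and the same rearrangement of $w_0^*<1$. Your direct sign check of $F'$ in the non-interior case is a slightly more explicit version of the paper's appeal to convexity. The paper never justifies $w_1=w_2$ beyond citing symmetry, so falling back on that puts you exactly on par with it. Both arguments also tacitly need $s>0$: at $s=0$ one gets $w_0^*=0$ and $F$ increasing, so there is no interior minimizer.

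The one step you add, however, would fail: for fixed $w_0$ the cost is \emph{not} always minimized at $w_1=w_2$. Put $u=(1-w_0)/2$ and $p=w_1w_2\in(0,u^2]$, with $p=u^2$ iff $w_1=w_2$. Then
\[
\frac{c^2}{q\sigma^2\Delta}=\bigl(2w_0u+s^2p\bigr)\Bigl(\frac{2}{w_0}+\frac{2u(1+v)}{p}\Bigr)=4u+2s^2u(1+v)+\frac{4w_0u^2(1+v)}{p}+\frac{2s^2p}{w_0}.
\]
Its minimizer in $p$ is $p^\dagger=w_0u\sqrt{2(1+v)}/s$, and $p^\dagger<u^2$ exactly when $w_0<\underline{w}:=s/\bigl(s+2\sqrt{2(1+v)}\bigr)$. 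For example, with $s=v=1$ and $w_0=0.1$, the symmetric split gives $8.45$ while $p=0.09$ gives $7.2$.

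The symmetric reduction is nonetheless globally correct, and this same computation proves it. On $(0,\underline{w}]$ the slice minimum is $(1-w_0)\bigl(\sqrt2+s\sqrt{1+v}\bigr)^2$, which is strictly decreasing in $w_0$. On $[\underline{w},1)$ the slice minimum is attained only at $p=u^2$, so it equals your $2F(w_0)$; the two formulas agree at $\underline{w}$. Since
\[
\bigl(s+2\sqrt{2(1+v)}\bigr)^2-(1+2v)(4-s^2)=4+2s^2(1+v)+4s\sqrt{2(1+v)}>0,
\]
we have $w_0^*>\underline{w}$. Hence the minimum over the whole simplex is uniquely at $\bigl(w_0^*,\tfrac{1-w_0^*}{2},\tfrac{1-w_0^*}{2}\bigr)$. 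In the non-interior case the slice minimum is strictly decreasing on all of $(0,1)$, giving the infimum as $w_0\to1$. This closes the gap that both your proposal and the paper leave open.
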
 
The optimal vehicle-currency
weight $w_0^*$ increases with $s$: when the $(1,2)$ pair is volatile, the pool tilts toward the vehicle
to lower IPL generated by
the cross pair.  Conversely, $w_0^*$ decreases with $v$ (the relative volume of the
non-vehicle pair): when the $(1,2)$ pair has high volume, the pool should provide more depth to the cross pair and reduce its price impact.

We plot the regions in which $c^*$, $c^\text{SYM}$, and $c^\text{BP}$ are optimal over the status quo cost, $c^\text{SQ}$ in Figure~\ref{fig:param_regions}.
The precise expressions for the thresholds are found in Appendix~\ref{app:thresholds}.

\begin{figure}[t]
    \centering
    \begin{minipage}[t]{0.49\textwidth}
        \centering
        \includegraphics[width=\linewidth]{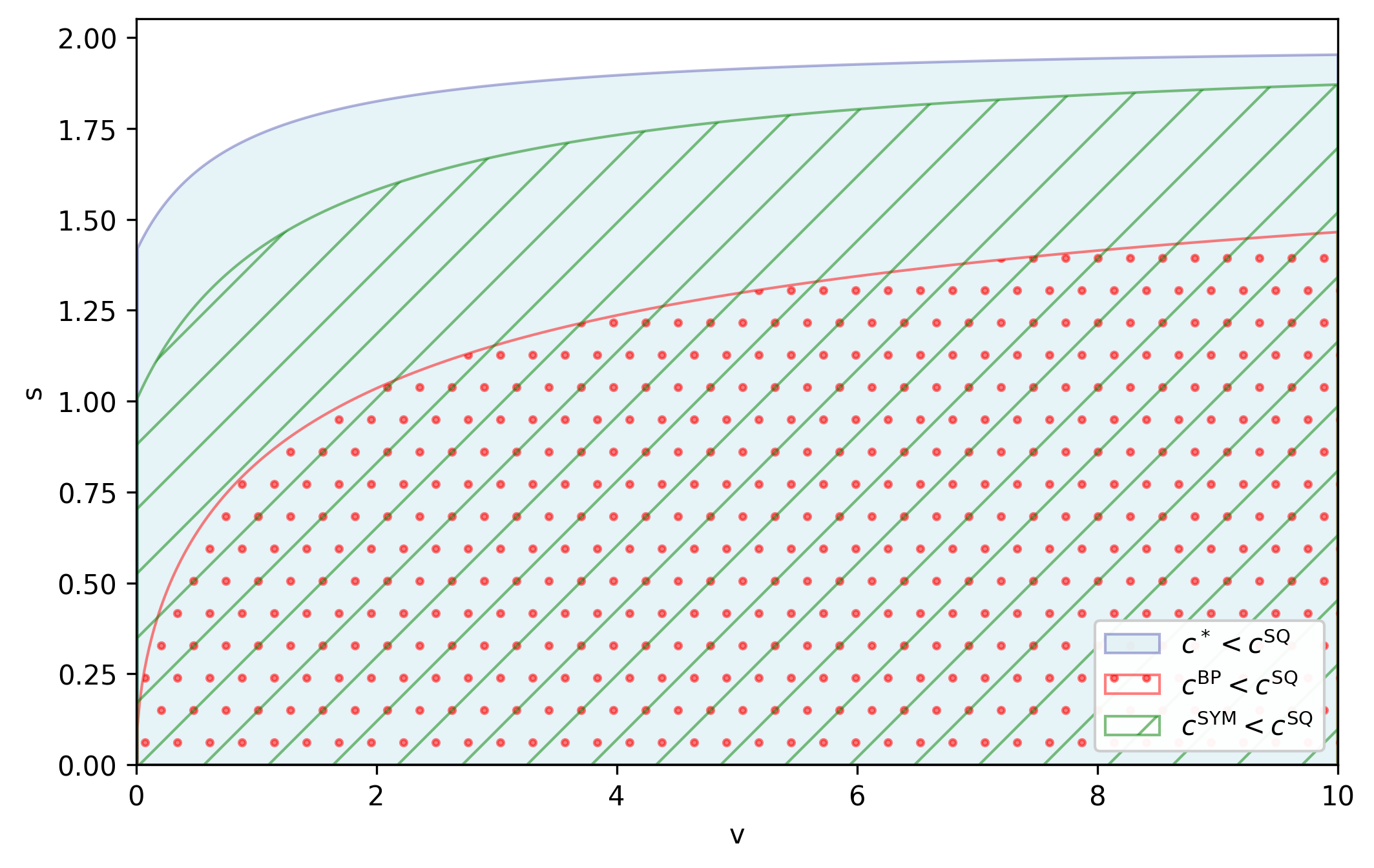}
      \caption{$s$ vs. $v$ regions for optimality over the status quo under the three-currency $(v,s)$ parametrization.}
      \label{fig:param_regions}
    \end{minipage}%
    \hfill%
    \begin{minipage}[t]{0.49\textwidth}
        \centering
        \includegraphics[width=\linewidth]{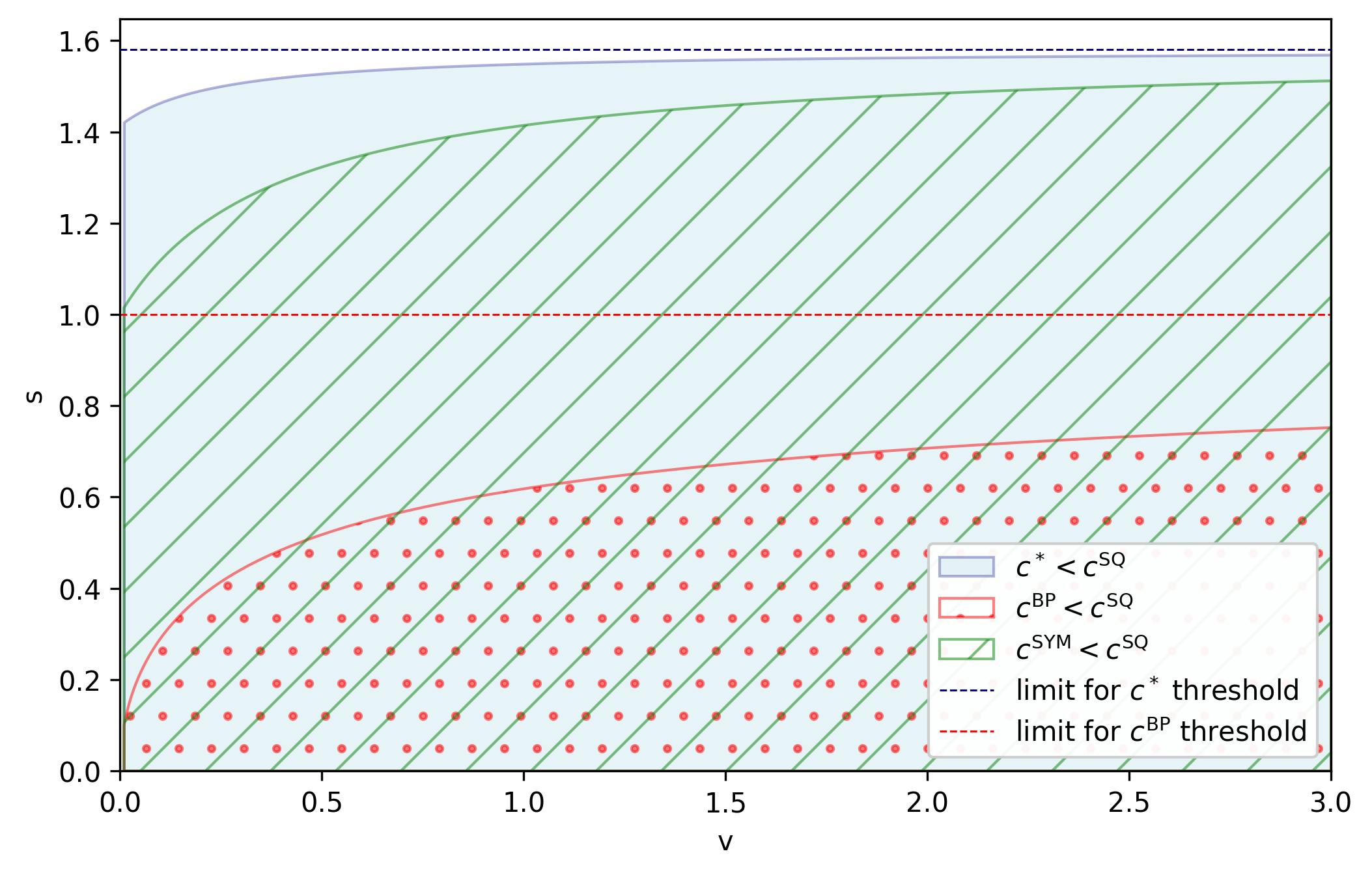}
        \caption{$s$ vs. $v$ regions for optimality over the status quo under the $(v,s)$ parametrization for $N=5$ currencies.}
        \label{fig:param_regions_N}
    \end{minipage}%
    \hfill%
\end{figure}

\subsection{Extension to More Currencies}\label{subsec:hubnspoke}
The above can be generalized to $N\geq 2$ symmetric currencies organized around a vehicle currency.
Again, let currency $0$ be the vehicle and suppose that $(0,i)$ pairs have equal trading
volumes and volatilities, $q$ and $\sigma$, while each pair $(i,j)$ with $i,j \neq 0$ trading at
volume $vq$ with volatility of returns $s\sigma$.
Under this parametrization, the weights satisfy $w_i = (1-w_0)/N$, and the problem reduces to optimizing
over $w_0 \in [0,1]$, summarized in the following proposition and proved in Appendix~\ref{app:param}.
\begin{proposition}[restate=nsym,name={Optimum for vehicle and $N$ symmetric currencies}]\label{prop:nsym}
  Under the symmetric $(v,s)$ parametrization for $N+1$ currencies, if $s^2 <  2(1+Nv)/(1+(N-1)v)$, the expected aggregate trading cost from Equation~\eqref{eq:aggregatecost} has unique interior minimizer at $w_i^* = (1-w_0^*)/N$ for all $i>0$ and
  \[
    w_0^* = \frac{s}{\sqrt{(1+Nv)(2N-(N-1)s^2)}}
  \]
  If $s^2 \geq 2(1+Nv)/(1+(N-1)v)$, no interior minimizer exists; the infimum over $w_1,w_2>0$ is obtained as $w_0 \rightarrow 1$.
\end{proposition}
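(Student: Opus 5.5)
The plan is to substitute the symmetric ansatz $w_i = (1-w_0)/N$ for $i>0$ into the squared cost from Proposition~\ref{prop:cost}. This turns the problem into minimizing $(c^*)^2 = \Delta H_w \sum_i E[Q_i]/w_i$ over the single scalar $w := w_0\in(0,1)$, which is the reduction the statement adopts. If needed, I would justify the ansatz by symmetrizing an arbitrary $(w_1,\dots,w_N)$ over permutations of the non-vehicle currencies and checking that the cost does not increase. I regard this as the main obstacle to a fully rigorous argument, because $H_w$ is bilinear rather than convex. I would therefore, like the paper, take the symmetric restriction as given and concentrate on the scalar problem.

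\textbf{Computing the two factors.} Write $u=(1-w)/N$. Under the parametrization, $E[Q_0]=Nq$ and $E[Q_i]=q\bigl(1+(N-1)v\bigr)$ for $i>0$. For $H_w$, there are $N$ vehicle pairs, each with weight product $wu$ and variance $\sigma^2$, and $\binom{N}{2}$ cross pairs, each with weight product $u^2$ and variance $s^2\sigma^2$. This gives
\[
H_w=\sigma^2(1-w)\bigl[w+a(1-w)\bigr],\qquad a:=\tfrac{(N-1)s^2}{2N},
\]
\[
\sum_i \frac{E[Q_i]}{w_i}=Nq\,\frac{(1-w)+bw}{w(1-w)},\qquad b:=N\bigl(1+(N-1)v\bigr).
\]
The factor $(1-w)$ cancels, so
\[
(c^*)^2=Nq\sigma^2\Delta\, g(w),\qquad g(w)=\frac{\bigl[a+(1-a)w\bigr]\bigl[1+(b-1)w\bigr]}{w}.
\]
Expanding, $g(w)=\frac{a}{w}+(1-a)(b-1)\,w+\text{const}$.

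\textbf{Minimizing the scalar function.} Here $b-1=(N-1)(1+Nv)>0$ and $a>0$. When $1-a>0$, $g$ is strictly convex on $(0,\infty)$, and its unique stationary point solves
\[
w^2=\frac{a}{(1-a)(b-1)}.
\]
Substituting $1-a=\bigl(2N-(N-1)s^2\bigr)/(2N)$, the factors of $N-1$ and $2N$ cancel, leaving
\[
w_0^*=\frac{s}{\sqrt{(1+Nv)\bigl(2N-(N-1)s^2\bigr)}}.
\]
This point is interior exactly when $w_0^*<1$, which is equivalent to $a<(1-a)(b-1)$, i.e.\ $ab<b-1$. Written out, that is
\[
\tfrac{(N-1)s^2}{2}\bigl(1+(N-1)v\bigr)<(N-1)(1+Nv),
\]
which is the stated threshold $s^2<2(1+Nv)/(1+(N-1)v)$. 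Uniqueness follows from the strict convexity of $g$.

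\textbf{The non-interior case.} Suppose $s^2\ge 2(1+Nv)/(1+(N-1)v)$. If additionally $1-a>0$, then $g'(w)=-a/w^2+(1-a)(b-1)<0$ on $(0,1)$, because there $(1-a)(b-1)\le a<a/w^2$. If instead $1-a\le 0$, the linear term is nonincreasing and $-a/w^2<0$, so again $g'<0$. In both cases $g$ is strictly decreasing on $(0,1)$, so there is no interior minimizer and the infimum is approached as $w_0\to1$.
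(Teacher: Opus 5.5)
Your proposal is correct and follows the paper's proof essentially step for step. The paper likewise takes the symmetric ansatz $w_i=(1-w_0)/N$ as given without justification, cancels the $(1-w_0)$ factor, and reduces to minimizing $A/w_0+Bw_0$ with exactly your $A=a$ and $B=(1-a)(b-1)$, locating the minimizer by AM--GM rather than by a derivative. Your derivation of the threshold via $ab<b-1$ (which also forces $a<1$) and your non-interior argument showing $g'<0$ on $(0,1)$ are slightly more careful than the paper's appeal to convexity, since they also cover the degenerate case $1-a\le 0$ where the formula for $w_0^*$ breaks down.
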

The expression for the optimal aggregate expected cost $c^*$ can be found as Equation~\eqref{eq:paramcost} in Appendix~\ref{app:param}.
The optimal weights of the three-currency case are recovered by setting $N=2$, and similarly, we observe the increasing optimal vehicle weight with $s$.

We plot, for the $N+1$ currencies case, the regions for which $c^*$, $c^\text{SYM}$, and $c^\text{BP}$ are optimal over the status quo cost, $c^\text{SQ}$ in Figures~\ref{fig:param_regions_N}. The precise expressions for the thresholds are found in the Appendix~\ref{app:thresholds}.

\section{General Constant Mean AMM Weight Optimization}
\label{sec:general-opt}

For a general distribution of volumes and volatilities of a given pool of currencies, the pool designer solves
for the weights $w_i>0$ summing to $1$ and minimizing
$\mathcal{F}(w) = H_w \sum_i E[Q_i]/w_i$. Closed-form solutions are not available in general, but the system can be solved numerically via constrained optimization methods.

A tractable heuristic approximation is obtained by noting that, in the objective
$\mathcal{F}(w) = H_w \cdot \sum_i E[Q_i]/w_i$, the first factor captures
the impermanent loss exposure of the pool and the second captures the
volume-weighted price impact.  The two factors interact through the weights, 
with the first biasing towards currencies with small relative volatilities
and the second biasing towards currencies with high trading volumes.
A useful simplification captures these forces and decouples the weights:
we approximate the minimization objective by the heuristic
$\tilde{\mathcal{F}}(w) = \sum_{i} \frac{E[Q_i]}{w_i H_i}$ 
where $H_i = \sum_{j \neq i} \sigma_{ij}^2$. The heuristic-approximated optimum is presented in the following proposition, proved in Appendix~\ref{app:approx}.
\begin{proposition}[restate=approxi,name=Approximate optimal weights]\label{prop:approx}
  When $H_i>0$ for all $i$ and $E[Q_i]>0$ for some $i$, the minimizer of $\tilde{\mathcal{F}}(w) = \sum_{i} \frac{E[Q_i]}{w_i H_i}$ subject to
  $\sum_{i} w_i = 1$, $w_i > 0$ $\forall i$, is
  \[
    w_i^* = \frac{1}{D} \sqrt{\frac{E[Q_i]}{H_i}},
    \qquad \forall\, i=0,\ldots,N,
    \qquad \text{where } D = \sum_j \sqrt{E[Q_j]/H_j},
  \]
\end{proposition}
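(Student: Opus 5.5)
We minimize $\tilde{\mathcal{F}}(w)=\sum_i a_i/w_i$ with $a_i := E[Q_i]/H_i \ge 0$, subject to $\sum_i w_i = 1$ and $w_i>0$. Since $H_i>0$, every $a_i$ is finite, and at least one is strictly positive. The quickest route is Cauchy--Schwarz, which gives existence, optimality and uniqueness at once. For any feasible $w$,
\[
\Bigl(\sum_i \sqrt{a_i}\Bigr)^2 = \Bigl(\sum_i \sqrt{\tfrac{a_i}{w_i}}\,\sqrt{w_i}\Bigr)^2 \le \Bigl(\sum_i \tfrac{a_i}{w_i}\Bigr)\Bigl(\sum_i w_i\Bigr) = \tilde{\mathcal{F}}(w).
\]
So $\tilde{\mathcal{F}}(w)\ge D^2$ with $D=\sum_j\sqrt{a_j}>0$. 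Next I would check that $w_i^*=\sqrt{a_i}/D$ sums to one and attains the bound: $\sum_i a_i/w_i^* = D\sum_i\sqrt{a_i}=D^2$.

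Equality in Cauchy--Schwarz requires $\sqrt{a_i/w_i}=\lambda\sqrt{w_i}$ for all $i$, i.e.\ $w_i\propto\sqrt{a_i}$. Normalizing by $\sum_i w_i=1$ then forces $w=w^*$, so $w^*$ is the unique minimizer whenever it is feasible. As a cross-check, I would also derive it from the Lagrangian: $-a_i/w_i^2+\mu=0$ gives $w_i=\sqrt{a_i/\mu}$. Convexity of each $a_i/w_i$ on $w_i>0$ makes this stationary point the global minimum.

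The main obstacle is the boundary case where some $E[Q_i]=0$. Then $w_i^*=0$, which violates the strict positivity constraint $w_i>0$. In that case the infimum $D^2$ is approached as $w_i\to0$ but is not attained inside the open simplex. I would handle this by reading the statement on the closed simplex, with the convention $0/0=0$ for zero-volume currencies: $w^*$ is then the minimizer, and it is the unique interior one when all $E[Q_i]>0$. I would state this caveat explicitly in the proof.
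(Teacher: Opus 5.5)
Your proof is correct, but your main argument differs from the paper's. The paper works entirely through the Lagrangian. The first-order conditions $E[Q_i]/(w_i^2H_i)=\lambda$ give $w_i\propto\sqrt{E[Q_i]/H_i}$, and normalization gives $\sqrt{\lambda}=D$. It then notes that the Hessian of the Lagrangian is diagonal with entries $2E[Q_i]/(w_i^3H_i)>0$ and concludes a global minimum on $w_i>0$. Strictly, that conclusion rests on the Hessian being positive definite throughout the open orthant, i.e.\ on convexity, not on the local second-order condition the paper cites.

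Your Cauchy--Schwarz bound $\tilde{\mathcal{F}}(w)\ge D^2$ is more economical. One inequality gives the global lower bound, its attainment, uniqueness via the equality case, and the optimal value $\tilde{\mathcal{F}}(w^*)=\bigl(\sum_j\sqrt{E[Q_j]/H_j}\bigr)^2$, which the paper never states. No second-order or convexity reasoning is needed. The Lagrangian route is more mechanical and carries over to separable objectives that lack such a clean inequality. Your cross-check reproduces it, with the convexity justification made explicit.

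Your boundary caveat is also right, and it applies equally to the paper's proof. Suppose $E[Q_i]=0$ for some but not all $i$. Then the condition $E[Q_i]/(w_i^2H_i)=\lambda>0$ cannot hold, and that Hessian entry vanishes. Indeed $\tilde{\mathcal{F}}(w)>D^2$ for every $w$ with all $w_i>0$, so no minimizer exists in the stated feasible set. The hypothesis should read $E[Q_i]>0$ for all $i$, or the problem should be posed on the closed simplex as you propose.
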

This heuristic rule assigns each currency a weight proportional to the square root of its expected aggregate trading
volume $E[Q_i]$ and the inverse of the sum of its relative variances. 
The square-root scaling arises from the diminishing effects of concentrating weight
on a single currency in reducing price impact while leaving the impermanent loss exposure
unimproved. Hence, the optimum disperses weight broadly, with the degree of
dispersion governed by the square-root of relative volumes and volatilities.

\subsection{Numerical Validation}
\label{subsec:numerical}

We evaluate the quality of the approximation $\min \tilde{\mathcal{F}}$ by comparing the
equilibrium trading cost it achieves against the cost attained by the
numerical minimizer of $\mathcal{F}$, across a wide range of pool sizes and
randomly drawn market environments. 

\textbf{Methodology}: For each pool of $N \in \{2,  3, \ldots, 20\}$ non-vehicle currencies (pooled together with the vehicle), we conduct $500$ independently
drawn trials, proceeding as follows. 
To generate covariance matrices, we use a synthetic $4$-factor model~\cite{chamberlain1983arbitrage}, where returns are correlated through shared factors, with loadings $B \in \mathbb{R}^{N \times 4}$, a matrix of standard normal entries.
The covariance matrix $K$ is constructed as $K = BB^T + \Psi$, where $\Psi$ is a diagonal matrix with entries uniformly chosen from $[0,1]$, representing idiosyncratic risk.
An additional column and row of zeros is added for the 0-return vehicle currency.
The pairwise relative variance matrix
$\sigma^2 \in \mathbb{R}^{(N+1) \times (N+1)}$ is then computed with entries $ \sigma_{ij}^2 =
K_{ii} + K_{jj} - 2K_{ij}$.  
Expected trading volumes are drawn as $E[Q_{ij}] = q_i q_j$ with $q_i \sim \text{LogNormal}(0, 1)$,
yielding a symmetric volume matrix with log-normally distributed marginals.
$E[Q_i]$ for all $i$ and $E[Q]$ are
computed from this matrix.
We compute three weight vectors: $w^\mathcal{F}$, the numerical minimizer of $\mathcal{F}(w) = H_w \sum_i E[Q_i]/w_i$ using sequential quadratic programming 
(SLSQP)~\cite{kraft1988software};
$w^{\tilde{\mathcal{F}}}_i = D^{-1}\sqrt{E[Q_i]/H_i}$, the approximation from Proposition~\ref{prop:approx}; and $w_i^\text{EQ} = 1/(N+1)$, the equal weights vector. The aggregate expected cost of Equation~\eqref{eq:aggregatecost} is evaluated for each weight vector with trade size $\Delta = 1$. We also compute the costs of the status quo vehicle routing and the dedicated bilateral pools architectures for comparison. 

\textbf{Results}: Figures~\ref{fig:approx-cost} and~\ref{fig:approx-cv} report the mean aggregate expected cost, expressed as a percentage of total volume, and its coefficient of variation over the trials as a function of pool size $N$. The two methods achieve nearly identical costs for all $N$, with the gap remaining small and roughly constant as $N$ grows from $3$ to $20$. This implies that ignoring the interaction between $H_w$ and $\sum_i E[Q_i]/w_i$ when optimizing weights is negligible in practice. The coefficient of variation tracks closely across both methods for all $N$, indicating that variability reflects heterogeneity of the market environments rather than any degradation of the approximation.
A pool designer can thus use the closed-form rule $w_i^* \propto \sqrt{E[Q_i]/H_i}$ in place of a much more computationally demanding numerical optimizer. This is practical for AMM weight optimization in computationally restricted applications such as in blockchains, or for efficiently estimating costs in the optimization algorithm described in the next section.

\begin{figure}[t]
    \centering
    \begin{minipage}[t]{0.49\textwidth}
        \includegraphics[width=\linewidth]{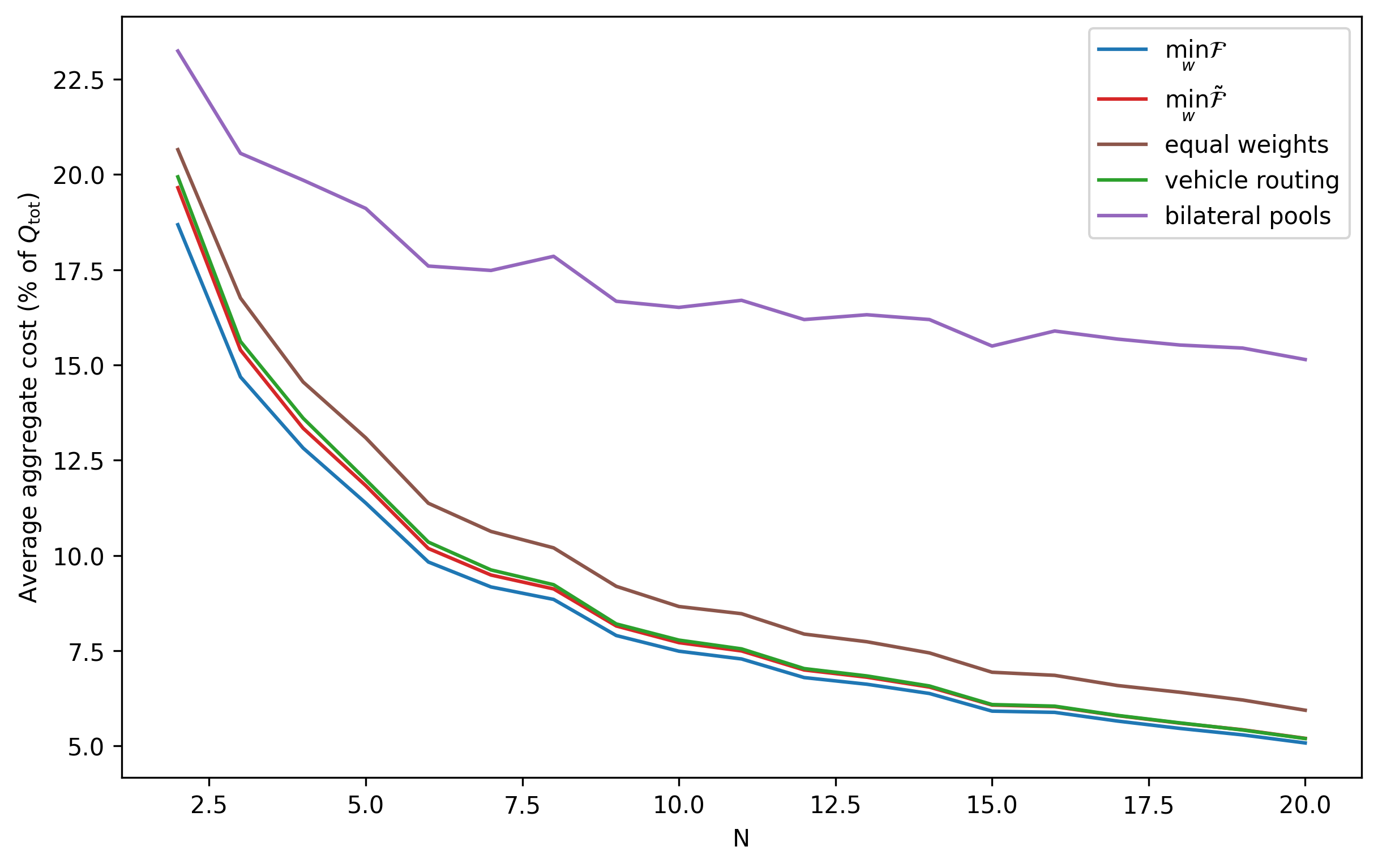}
        \caption{Mean equilibrium aggregate trading cost (as a percentage of total
    volume) achieved by the exact minimizer of $\mathcal{F}$ and the
    closed-form approximate minimizer of $\tilde{\mathcal{F}}$, averaged over $500$
    trials per pool size.}
        \label{fig:approx-cost}
    \end{minipage}
    \hfill
    \begin{minipage}[t]{0.49\textwidth}
        \includegraphics[width=\linewidth]{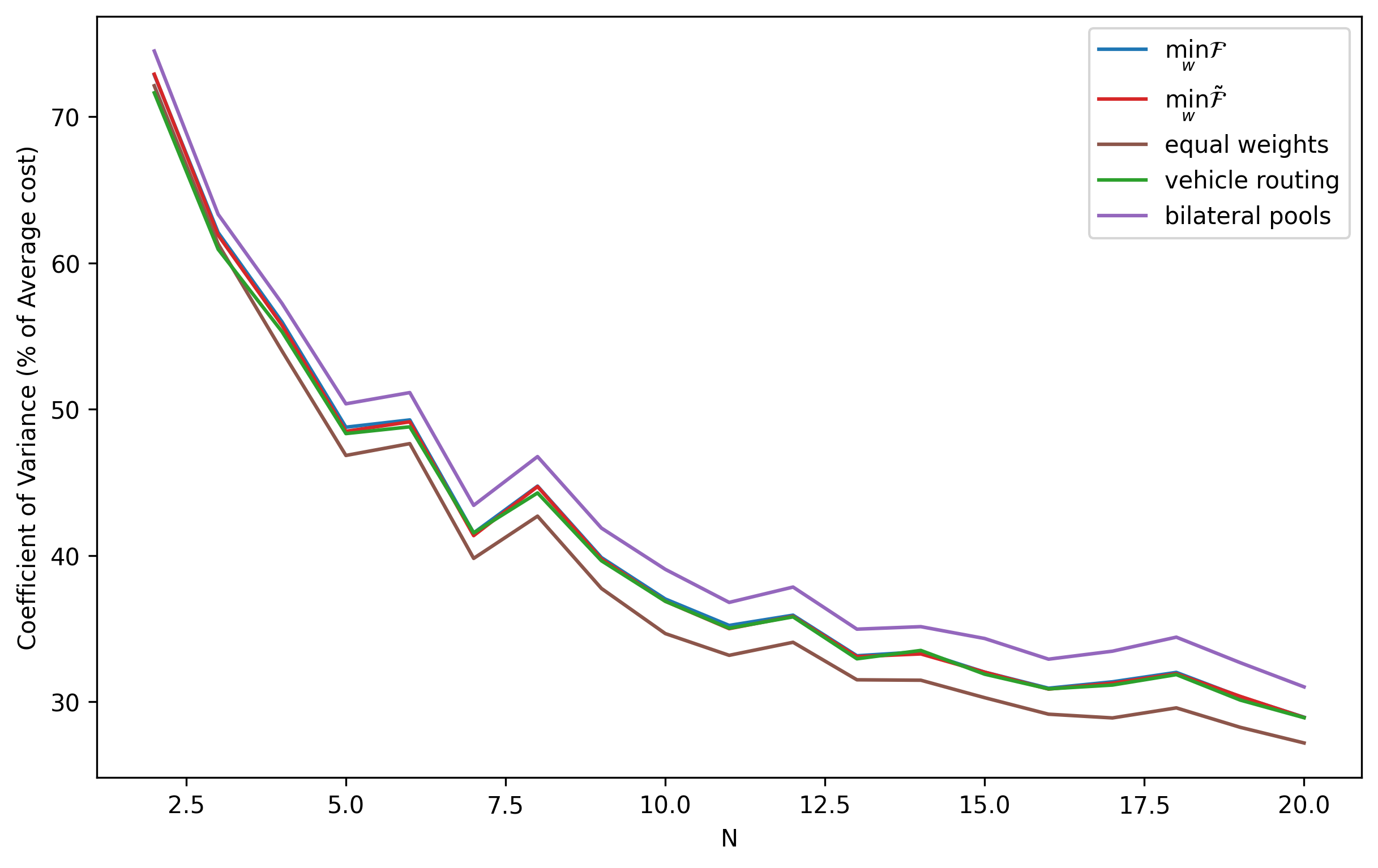}
        \caption{Coefficient of variation (standard deviation / mean) of the
    equilibrium aggregate cost across trials, as a function of pool size $N$,
    for the exact and approximate minimizers.}
        \label{fig:approx-cv}
    \end{minipage}
\end{figure}

\section{Currency Partitioning for System-Level Pool Design}
\label{sec:pool-selection}

In this section, we address the question of \emph{which} currencies to pool together. Evidently, not every currency benefits from co-pooling: if the returns of two currencies have high relative variance $\sigma_{ij}^2$, their impermanent loss raises the fee, which may outweigh liquidity gains. We therefore introduce the \emph{currency partitioning problem}: given $N$ currencies and their pairwise volumes and return covariance structure, partition the currencies into multi-currency pools and unpooled singletons to minimize aggregate trading costs.
 
We fix currency $0$ as the vehicle and consider multi-currency pools of the form $I = \{0\} \cup G$ where $G \subseteq \{1,\ldots,N\}$. 
Let $M = \{I_1,\ldots,I_K,O\}$ be a collection of such pools (that are disjoint except for the vehicle) and the set $O$ of unpooled currencies.
A trade between a non-vehicle $i$ in a pool $I$ and $j \notin I$ is executed in two legs: $i \leftrightarrow 0$ in the pool, and $0 \leftrightarrow j$ in the pool containing $j$ or a bilateral pool with the vehicle if $j \in O$. Trades between unpooled currencies are also routed bilaterally through the vehicle.

Let $c(M)$ denote the total expected system cost; the currency partitioning problem is to choose $M$ minimizing $c(M)$. The expression for $c(M)$ is calculated in Appendix~\ref{app:systemcost}.

\subsection{Rolling Window Realized Cost}
\label{subsec:realized-cost}
 
Our empirical evaluation separates the \emph{ex-ante} and \emph{ex-post} quantities.
Liquidity providers set the pool depth $V$ and fee $f$ before observing the period's realized trading volume and price movements.  
They base these decisions on expectations formed from recent data.

We implement this separation as follows. For each quarter $t$, we form
the \emph{expected} covariance matrix using a 12-month rolling window of monthly returns ending at the start of quarter $t$, scaled
to a quarterly frequency. We form the \emph{expected} volume matrices using the same 12-month windows, scaled to quarterly averages.
The pool planner uses these expected values
to compute the break-even fee $f^*$ via Proposition~\ref{prop:cost} and the optimal weights $w^*$ numerically by minimizing $\mathcal{F}(w)$ at the beginning of each quarter $t$.
Liquidity providers then decide $V = V^*$, the pool depth.

The \emph{realized} cost in quarter $t$ is then computed by applying the unit cost implied by the
ex-ante pool parameters $(w^*, f^*, V^*)$ to the \emph{realized} trade volume for quarter $t$, which is then summed for every quarter over the evaluation period. The expression for the realized cost is also found in Appendix~\ref{app:systemcost}. 
This specification captures an additional risk faced by liquidity 
providers: the fee, weights, and depth are fixed at the start
of the quarter, but the realized cost depends on how much volume actually
flows through the pool. 

\subsection{Hierarchical Agglomerative Clustering Algorithm}
The number of partitions of $N$ items grows super-exponentially, making exhaustive search infeasible. Instead, we exploit the cost function structure to inform heuristic methods. Since a currency inside a multi-asset pool executes all its trades inside the pool, the main determinant of pooling efficiency is the pairwise relative variance of currencies within a pool. Currencies with highly correlated returns have lower $\sigma_{ij}^2$ and thus generate lower impermanent loss. Hence, we use a \emph{hierarchical agglomerative clustering} (HAC) algorithm on the correlations~\cite{manning2008introduction}.

First, we transform correlations into distances $d_{ij} = \sqrt{0.5(1-\rho_{ij})} \in [0,1]$ so that highly correlated currencies are close together. We then construct a partition over the set of currencies as follows. Starting from singletons (equivalent to the status quo), we iteratively merge the two clusters with smallest distance. Here, the distance between two clusters $A$ and $B$ is defined as $d_{AB} = \frac{1}{|A||B|} \sum_{i \in A, j \in B} d_{ij}$, the mean pairwise distance between the currencies across the two clusters. We require $d_{AB} \leq \tau$, a distance threshold.
If all clusters are farther than $\tau$ from each other, the procedure ends and returns the current partition.

We evaluate the partition produced by a range of thresholds $\tau \in [0,1]$ against the expected aggregate cost $c(M)$ (evaluated using approximate weights from Proposition~\ref{prop:approx} for the sake of efficiency) and select the threshold $\tau$ and resulting partition that minimizes this cost.

\subsection{Empirical Data}
\label{subsec:data}

Exchange rates come from the IMF Exchange Rates dataset (monthly period-average \emph{US Dollar per domestic currency} indicator)~\cite{imf_er}, from which we compute monthly returns and relative variances.
Trade volumes are proxied by the USD value (in millions) of bilateral goods exports between the countries issuing the currency. This approximates the real demand for cross-border currency exchange.
This data is obtained from the IMF International Trade in Goods dataset (monthly \emph{Export of Goods, FOB, US dollar} indicator)~\cite{imf_imts}.\footnote{We use the Euro Area (Eurozone) in lieu of a country for the Euro.}  Currency partitioning with the HAC algorithm uses 2002--2007 monthly exchange rates and trade data to calculate correlations, relative variances, and trade volumes over the whole period (scaled to a quarterly basis); out-of-sample 12-month quarterly rolling window relative variances and trade volumes are calculated for the evaluation period, 2008--2023. We choose these windows for several reasons: first, 2002 is the starting year as this is when the Euro was officially adopted; second, we avoid using data from the 2008 financial crisis in the pool selection, but do not avoid evaluating on this crisis period; third, 2023 is the end year as it is the last year for which IMF AREAER exchange rate regime classifications are available.

The last criteria is important as we restrict to non-pegged currencies under the IMF AREAER~\cite{imf_areaer}---currencies classified as independently floating, managed floating, crawling band, crawling peg
(pre-2008 classifications), other managed, free floating, floating, 
crawl-like, crawling peg, or stabilized (post-2008 classifications) over 2002--2023. Pegged currencies are excluded because their near-zero volatility against the peg makes them misleading.
For example, pooling USD/EUR/DKK/BGN, where the 
Danish krone (DKK) and the Bulgarian lev (BGN) are 
pegged to the Euro, results in a realized cost of 
$c(I) =$ 840 USD millions, versus $c^\text{SQ}(I) =$ 1{,}400 USD millions, 
a 40\% cost saving over 2008--2023.
However, this is a flawed analysis---central banks managing pegged currencies tend to provide guaranteed exchange rates against the peg, rather than using market supply and demand.
We further exclude currencies with observed zero returns against the USD for three consecutive months, or that are described as pegged by the IMF AREAER,\footnote{Afghani afghani, Cambodian riel, Dominican peso, Gambian dalasi, Georgian lari, Kyrgyz som, Laotian kip, Liberian dollar, Malagasy ariary, Moldovan leu, Myanmar kyat, and Papua New Guinean kina.} as well as the Haitian gourde (co-circulates with USD\footnote{IBAN lists USD as a currency of Haiti.}) and the Somali shilling (limited regime data). The resulting 43 currencies are listed in Table~\ref{tab:currencies}, with the USD as vehicle (currency $0$). For the remainder of this section, we fix $\Delta = 1$ USD million throughout.
 
\begin{table}[t]
  \centering
  \caption{Currencies with non-pegged exchange rates for all years in 2002--2023 according to IMF AREAER classification (43 currencies).}
  \label{tab:currencies}
  \begin{tabular}{p{0.97\textwidth}}
    \toprule
    Currency code \\
    \midrule
    ALL, AMD, AUD, BIF, BRL, CAD, CDF, CHF, CLP, COP, CZK, DZD, EUR, GBP, GHS, GTQ, IDR, INR, ISK, JMD, JPY, KES, KRW, MUR, MXN, MZN, NIO, NOK, NZD, PEN, PHP, PLN, PYG, RSD, SEK, SGD, THB, TRY, TZS, UGX, UYU, ZAR, ZMW\\
    \bottomrule
  \end{tabular}
\end{table}

\subsection{Empirical Results}
\label{subsec:pool-selection-empirical_hac}
The HAC algorithm, implemented in Python 3.13 with Numpy 2.2.1, Pandas 2.2.3, and SciPy 1.14.1 runs in 1.6 seconds on a 16-core Intel Core Ultra 7 155H processor and 16 GB RAM.
Table~\ref{tab:selected-pools_hac} shows the pool architecture selected by the HAC algorithm, and Table~\ref{tab:realized_costs_hac} reports out-of-sample realized aggregate costs over 2008--2023.
The optimized architecture, with distance threshold $\tau=0.46$, places 18 of the 43 currencies into four multi-asset pools, reducing realized aggregate system costs by approximately 13\% (roughly USD 2.5 billion) relative to bilateral USD routing. Some quarters exhibit realized savings exceeding 15-20\%, and performance persists and remains stable even through the 2008 financial crisis and the 2020 COVID-19 recession, indicating that the selected pool structures are persistent rather than artifacts of the calibration window.\footnote{The HAC algorithm is not theoretically cost-optimizing, but we compare it with a simple greedy algorithm in Appendix~\ref{app:greedy} to benchmark the HAC algorithm's performance.}

Patterns emerge from the resulting pools. 
Pool 1 includes currencies with high foreign exchange turnover often considered `safe-havens' (EUR, JPY, GBP, CHF) and other European currencies, reflecting the institutional and trade integration of the European economic area and its close partners~\cite{bis2025triennial,sato2024time}. Pool 2 groups commodity-linked AUD and NZD with ISK, all three of which are carry-trade target currencies~\cite{chen2003commodity,coudert2014looking}. Pool 3 pairs the Colombian peso and Turkish lira, both emerging markets exposed to high political risks~\cite{vizcaino2026resurgent}.
Pool 4 clusters KRW, SGD, and THB, East and South-East Asian currencies of manufacturing-heavy economies and regional trade integration~\cite{UNComtrade2025}.
This confirms that the trade-off at the heart of this paper---liquidity consolidation versus impermanent loss---can be resolved in practice by grouping currencies that co-move, capturing the benefits of shared liquidity without amplifying exchange rate risk. The emergent geographical and economic structures are a direct consequence of this co-movement pooling. Figure~\ref{fig:correlations_heatmap_hac} visualizes the resulting pools in a heatmap of pairwise correlations, ordered by pool assignment. The heatmap shows the three high-correlation blocks along the diagonal corresponding the Pools 1-4, with the large Pool 1 block most pronounced on the top-right.
Overall, the structured results and 13\% empirical cost reduction serve to validate that well-designed multi-currency pools can meaningfully lower foreign exchange trading costs relative to bilateral vehicle-currency routing.

\begin{table}[t]
\centering
\caption{Pool architecture from HAC algorithm. USD included implicitly in all pools.}
\label{tab:selected-pools_hac}
\begin{tabular}{lp{0.6\linewidth}l}
\toprule
Pool & Currencies & Description \\
\midrule
1 & ALL, CHF, CZK, EUR, GBP, JPY, NOK, PLN, RSD, SEK & High liquidity + Europe \\
2 & AUD, ISK, NZD & Carry trade targets \\
3 & COP, TRY & Macro risk\\
4 & KRW, SGD, THB & East \& South-East Asian\\
\bottomrule
\end{tabular}
\end{table}

\begin{table}[t]
\centering
\caption{Out-of-sample realized aggregate trading costs, 2008--2023 in millions USD.}
\label{tab:realized_costs_hac}
\begin{tabular}{lrrrr}
\toprule
Architecture & $c(M)$ & $c^\text{SQ}$ & Savings & Savings (\%) \\
\midrule
HAC algorithm pool selection & 16{,}450 & 18{,}930 & 2{,}500 & 13\% \\
\bottomrule
\end{tabular}
\end{table}
 
\begin{figure}[t]
  \centering
  \begin{minipage}[t]{0.49\textwidth}
    \centering
    \includegraphics[width=0.9\linewidth]{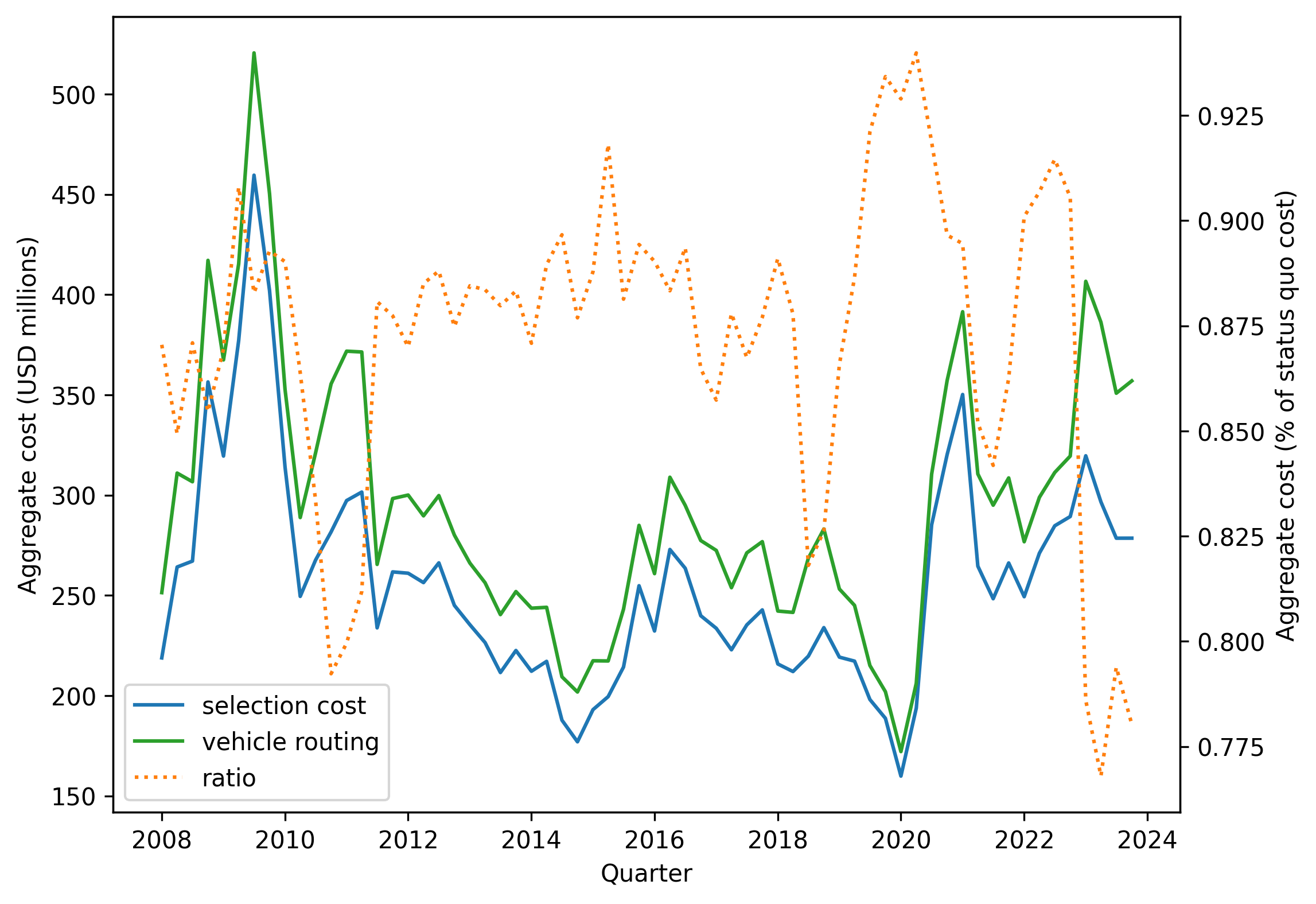}
    \caption{Quarterly realized cost of Table~\ref{tab:selected-pools_hac} architecture as a percentage of USD-routing cost.}
    \label{fig:realized_cost_hac}
  \end{minipage}%
  \hfill%
  \begin{minipage}[t]{0.49\textwidth}
    \centering
    \includegraphics[width=0.8\linewidth]{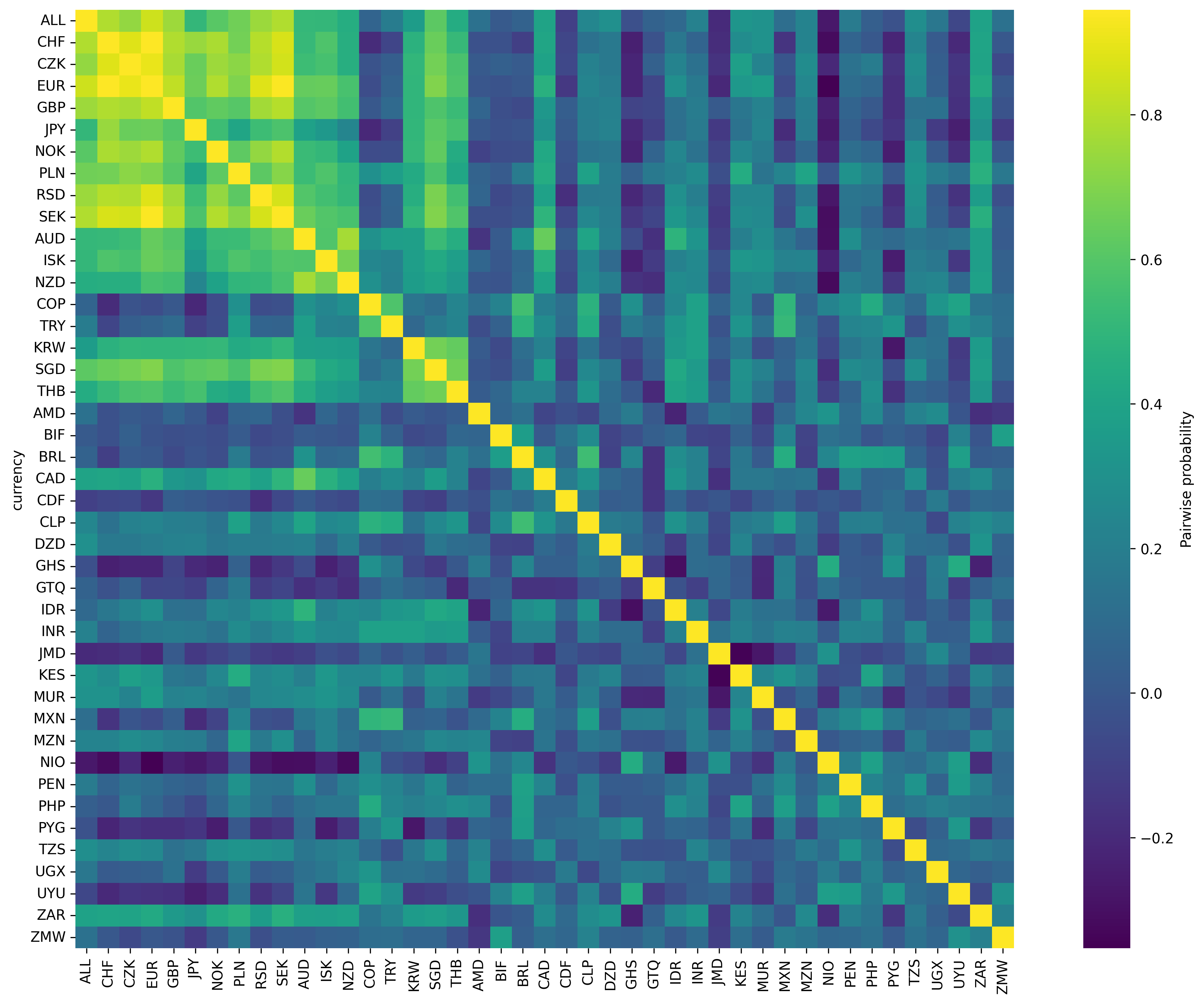}
    \caption{Heatmap of correlations with currencies ordered by HAC-optimal pool assignment.}
    \label{fig:correlations_heatmap_hac}
  \end{minipage}
\end{figure}

\section{Related Work}
\label{sec:related}

Developments on AMMs have accelerated since the constant-product invariant was introduced by Uniswap~V2~\cite{uniswapv2}. Subsequent protocols explored alternative invariants and liquidity mechanisms: Balancer generalized to the constant mean invariant, Curve developed a hybrid invariant optimized for low-volatility assets, and Uniswap~V3 introduced concentrated liquidity to improve capital efficiency~\cite{balancer,curve,uniswapv3}. Our work builds on the constant-mean framework and studies AMMs through the lens of pool formation and weight optimization.

Meanwhile, theoretical literature studies the economics of AMMs. The authors of~\cite{angeris2020improved,angeris2023replicating} show how arbitrage aligns prices with external markets and concave payoff functions can be replicated by suitable invariants, while \cite{angeris2022optimal} studies optimal routing across multi-asset AMMs. Another major focus is liquidity provider losses induced by arbitrage and price volatility, quantified by \cite{pintail2019uniswap,milionis2022automated}. Empirical and strategic analysis examine realized liquidity provider outcomes and adoption limits~\cite{heimbach2021behavior,milionis2023automated,capponi2021adoption}, optimal and equilibrium liquidity provision~\cite{cartea2024decentralised,lehar2025decentralized,aoyagi2020liquidity}, and endogenous fee setting~\cite{evans2021optimal,malinova2023learning} in 2-asset constant product pools. Closer to our setting, the work of~\cite{lipton2021automated,gogol2024cross} compare FX efficiency in AMMs under different invariants and levels of liquidity fragmentation. We add to this literature by embedding closed-form expressions for AMM prices and impermanent loss of multi-asset constant-mean pools into the equilibrium model of competitive liquidity provision from~\cite{malinova2023learning} in a foreign exchange setting.

Our setting also relates to the literature on vehicle currencies in foreign exchange markets. Consistent with our framework, \cite{krugman:1984} show that vehicle currencies emerge from transaction cost minimization and economies of scale concentrating liquidity along vehicle currency pairs, while \cite{goldberg2008vehicle} emphasize the persistence of vehicles due to trade and industry structure.

Finally, our problem is related to portfolio optimization. Building on the mean--variance optimization framework of \cite{markowitz1952portfolio}, subsequent work incorporates metaheuristic methods such as genetic and evolutionary algorithms to address nonlinearities~\cite{di2016portfolio,crama2003simulated,rubiogarcia2022portfolio}. 
Our problem, however, jointly determines pool membership and optimal weights with a highly nonlinear objective where weights also appear in the denominator.
Moreover, while portfolio optimization seeks to diversify away covariance risk, our setting benefits from highly correlated assets.

\section{Conclusion and Future Work}
\label{sec:conclusion}
This paper studies the design of multi-currency AMM architectures for
foreign exchange systems. In our framework, the equilibrium
trading cost depends jointly on trading volumes, exchange-rate volatilities, and pool composition. We show that under this framework,
pooling multiple currencies in a constant weighted mean AMM can reduce aggregate trading costs compared to the
status quo---bilateral routing through a vehicle currency.
The gains arise because multi-currency pools enable direct trades while still concentrating liquidity across many trading pairs.
While multi-asset pools suffer from increased risk exposure, pooling assets with correlated exchange rates and appropriately weighing exposures limits the additional impermanent loss.
We derive analytical results characterizing optimal weights in
symmetric environments and a tractable approximation for general weight optimization, and formulate the system-level problem of partitioning currencies into pools.

To solve the partitioning problem,
we employ a hierarchical agglomerative clustering heuristic algorithm
that exploits the correlation structure of exchange-rate returns to efficiently group currencies into structurally similar pools. Empirically, using exchange-rate and international trade
data for 43 non-pegged currencies over 2008--2023, we find that the resulting architecture
reduces realized aggregate trading costs by \textasciitilde 13\% relative to bilateral USD
routing, with improvements remaining stable through major episodes of financial stress.

Several directions remain for future research. First, relaxing the assumptions we make by allowing persistent order imbalances, richer exchange-rate dynamics, stochastic trade-size distributions, or endogenous trading demand may materially affect the optimal architecture and equilibrium costs. Second, future work could seek analytical characterizations of the general weight optimization solution or derive approximation schemes with formal error bounds.
Third, the present framework treats pool membership as fixed after the initial pool setup. The design of a dynamic multi-currency AMM that supports rebalancing mechanisms allowing currencies to migrate between pools as market conditions change could improve the flexibility and robustness of the multi-currency architecture, especially for currencies issued by countries with volatile economic policies. 
The current work uses historical correlations for pool formation, but the same theory extends to dynamic pools formed using time-varying volatilities.
Finally, the robustness of the agglomerative clustering algorithm should be studied over a variety of training data regimes and sets of currencies before it can be proposed as a reliable method for currency pooling, especially if deployed in an algorithmically rebalancing multi-currency AMM.
Indeed, there may exist algorithms that are
more efficient or robust across different market environments. Alternative approaches based on graph partitioning, metaheuristics, convex relaxations, or machine learning may yield improved solutions or stronger theoretical guarantees. These methods may also be able to deal with more complex pooling architectures, such as non-disjoint pools or pools without a common vehicle currency.

\cleardoublepage 
\phantomsection
\addcontentsline{toc}{section}{References}
\bibliography{references}

\appendix
\section{Proofs}
\label{app:proofs}

\subsection{Proofs of benchmark arrangement trading costs}
\label{app:benchmarks}
\statusquocost*
\begin{proof}
    We sum the costs over the volume of the bilateral trades:
\[
\sum_{i<j} Q_{ij}\left(\sqrt{\frac{\Delta \sigma_i^2}{E[Q_i]}}
+ \sqrt{\frac{\Delta \sigma_j^2}{E[Q_j]}}\right) = \sum_{i<j} Q_{ij} \sqrt{\frac{\Delta \sigma_i^2}{E[Q_i]}}
+ \sum_{i<j} Q_{ij} \sqrt{\frac{\Delta \sigma_j^2}{E[Q_j]}}.
\]
Swapping $i \leftrightarrow j$ in the second sum and using $Q_{ij} = Q_{ji}$,
the two combine to give $\sum_i \sum_{j \neq i} \frac{\sigma_i Q_{ij} \sqrt{\Delta}}{\sqrt{E[Q_i]}}$.
Hence, in expectation,
\[
c^\text{SQ} = \sum_i \sqrt{\sigma_i^2 E[Q_i] \Delta} \hfill \qedhere
\]
\end{proof}
\equalweightscost*
\begin{proof}
    With equal weights, we can drop the weights and recover a constant product invariant. For a trade of size $\Delta$ in numeraire from currency $i$ to $j$, this invariant function requires
\[
  \left(R_i + \frac{\Delta}{S_i}\right)
  \left(R_j - \frac{\Delta}{S^{\text{eff}}}\right)
  = R_i R_j,
\]
\[
\iff \frac{\Delta}{S^\text{eff}} = R_j - \frac{R_i R_j}{R_i + \Delta/S_i} = R_j\left(\frac{\Delta/S_i}{R_i + \Delta/S_i}\right) = \frac{R_j \Delta}{S_i R_i + \Delta} \iff
S^\text{eff} = \frac{S_i R_i + \Delta}{R_j}
\]
The spot price of currency $i$ in terms of currency $j$ implied by the pool is
$S_{ij}^{\text{AMM}} = R_j / R_i$.  At fundamental prices,
$S_{ij}^{\text{AMM}} = S_i / S_j$ implies $S_i R_i = S_j R_j$ for all pairs, so
each currency contributes equal value to the pool: $V = (N+1) S_i R_i$ for all $i$.

Hence, $S^\text{eff} = \frac{S_i R_i}{R_j} + \frac{\Delta}{R_j} = S_j + \frac{\Delta (N+1) S_j}{V}$. The price impact $\Pi(\Delta) = \frac{(N+1)\Delta}{V}$ follows from Equation~\eqref{eq:priceimpact} using $S = S_j$.

The pool value $V(T)$ is also governed by the invariant: $\prod_i R_i(0) = \prod_i R_i(T)$, implying
\[
\prod_i \frac{V(0)}{S_i(0)(N+1)} = \prod_i \frac{V(T)}{S_i(T)(N+1)} \iff \frac{V(T)}{V(0)} = \prod_i \left(\frac{S_i(T)}{S_i(0)}\right)^{1/(N+1)} =\prod_i x_i^{1/(N+1)}.
\]
Meanwhile, the counterfactual hold value is 
\[
V_\text{hold}(T) = \sum_i x_i S_i(0) R_i(0) = \frac{V(0)}{N+1} \sum_i x_i
\]
By Equation~\eqref{eq:ipl}, the $\IPL$ is then
\[
  \IPL = \prod_i x_i^{1/(N+1)} - \frac{1}{N+1}\sum_i x_i 
\] 
For $x_i = 1 + \epsilon_i$ and $\epsilon_i \approx 0$, the second-order Taylor approximation around $\epsilon_i = 0$ is
\[
\IPL \approx -\frac{1}{2(N+1)^2} \sum_{i<j}(\epsilon_i - \epsilon_j)^2, \qquad E[|\IPL |] \approx \frac{H}{2(N+1)^2}, \qquad H = \sum_{i<j} \sigma_{ij}^2
\]
where the expectation is taken assuming $\epsilon_i$ has zero-mean.
This results in the break-even pool value $V = \frac{2(N+1)^2 f E[Q]}{H}$ (from Equation~\eqref{eq:breakeven}) and unit trading cost
\[
c(\Delta,f) = \frac{H \Delta}{2(N+1) f E[Q]} + f
\]
Minimizing over $f$ (the AM-GM inequality implies that $A/f + Bf \geq 2\sqrt{AB}$ for $f > 0$, and equality here holds when $f = \sqrt{A/B}$) 
yields
\[
f^* = \sqrt{\frac{H\Delta}{2(N+1) E[Q]}}, \qquad c^* = \sqrt{\frac{2 H \Delta}{(N+1) E[Q]}}
\] 
Aggregating over the total expected trade volume $E[Q]$ yields the expected aggregate cost
\[
c^\text{SYM} = \sqrt{\frac{2 H E[Q] \Delta}{(N+1)}} \hfill \qedhere
\]
\end{proof}

\subsection{Proofs of costs in constant-mean AMMs}
\label{app:ammcosts}
\priceimpact*
\begin{proof}
    Consider a trade between currencies $i$ and $j$ of notional size $\Delta$ in
the numeraire (so the trader pays $\Delta / S_i$ units of currency $i$).  
As only the reserves $R_i$ and $R_j$ change, the invariance function requires
\[
  \left(R_i + \frac{\Delta}{S_i}\right)^{w_i}
  \left(R_j - \frac{\Delta}{S^{\text{eff}}}\right)^{w_j}
  = R_i^{w_i} R_j^{w_j},
\]
where $\Delta / S^{\text{eff}}$ is the number of units of $j$ received.
Isolating the output quantity:
\[
  \frac{\Delta}{S^{\text{eff}}}
  = R_j \!\left(1 - \left(1 + \frac{\Delta}{S_i R_i}\right)^{-w_i/w_j}\right)
\]
Assuming $\Delta \ll S_i R_i = w_i V$, a second-order Taylor expansion gives
\[
  \left(1 + \frac{\Delta}{S_i R_i}\right)^{-w_i/w_j}
  \approx 1 - \frac{w_i}{w_j} \frac{\Delta}{S_i R_i}
    \left(1 - \frac{1}{2} \frac{\Delta}{S_i R_i} \frac{w_i + w_j}{w_j}\right)
\]
The spot price of currency $i$ in terms of currency $j$ is $S_{ij}^{\text{AMM}} = (R_j / w_j) / (R_i / w_i)$. At fundamental prices, $S_{ij}^{\text{AMM}} = S_i / S_j$ implies $S_i R_i / w_i = S_j R_j / w_j$ for all pairs, so each currency $i$ accounts for $w_i$ of total pool value, i.e., $w_i V = S_i R_i$. Then, using $R_j w_i / w_j = (S_i / S_j) R_i$ and $S_i R_i = w_i V$:
\[
  \frac{\Delta}{S^{\text{eff}}}
  \approx \frac{\Delta}{S_j}
    \left(1 - \frac{\Delta}{2V} \frac{w_i + w_j}{w_i w_j}\right)
\]
Inverting and applying a first-order expansion (valid for
$\Delta \ll 2 w_i w_j V / (w_i + w_j)$):
\[
  S^{\text{eff}} \approx S_j
    \left(1 + \frac{w_i + w_j}{w_i w_j} \frac{\Delta}{2V}\right)
\]
Substituting into~\eqref{eq:priceimpact} yields the result. \qedhere
\end{proof}
\impermanentloss*
\begin{proof}
    We first derive the exact expression for $V(T)$. As derived in the proof of Proposition~\ref{prop:priceimpact}, at fundamental prices, $R_i(t) = w_i V(t) / S_i(t)$ for all $i$.  Substituting into the constant-mean
invariant:
\[
  \prod_i \!\left(\frac{w_i V(T)}{S_i(T)}\right)^{w_i}
  = \prod_i \!\left(\frac{w_i V(0)}{S_i(0)}\right)^{w_i}
\]
Since $\sum_i w_i = 1$, the $V$ terms factor out as $V(T)^{\sum_i w_i} = V(T)$
and $V(0)^{\sum_i w_i} = V(0)$, giving
\[
  V(T) = V(0) \prod_i \left(\frac{S_i(T)}{S_i(0)}\right)^{w_i}
       = V(0) \prod_i x_i^{w_i}
\]
The counterfactual hold value is
\[
  V_{\text{hold}}(T)
  = \sum_i S_i(T) R_i(0)
  = \sum_i S_i(T) \cdot \frac{w_i V(0)}{S_i(0)}
  = V(0) \sum_i w_i x_i
\]
Substituting into~\eqref{eq:ipl} gives the exact expression
$\IPL = \prod_i x_i^{w_i} - \sum_i w_i x_i$.
 
For $x_i = 1+\epsilon_i$ and $\epsilon_i \approx 0$, the second-order Taylor
expansion around $\epsilon_i = 0$ is
\begin{align*}
  \IPL & \approx
  \left(1 + \sum_i w_i \epsilon_i
  + \frac{1}{2}\!\left[\left(\sum_i w_i \epsilon_i\right)^2
  - \sum_i w_i \epsilon_i^2\right]\right)
  - \left(1 + \sum_i w_i \epsilon_i\right)\\
  &= -\frac{1}{2}\!\left(\sum_i w_i \epsilon_i^2
  - \left(\sum_i w_i \epsilon_i\right)^2\right)
\end{align*}
Expanding $(\sum_i w_i \epsilon_i)^2 = \sum_i w_i^2 \epsilon_i^2
+ 2 \sum_{i<j} w_i w_j \epsilon_i \epsilon_j$, grouping the sums of $\epsilon_i^2$, and using $1 - w_i =
\sum_{j \neq i} w_j$ to write $\sum_i w_i(1-w_i)\epsilon_i^2
= \sum_{i<j} w_i w_j(\epsilon_i^2 + \epsilon_j^2)$, we get
\[
  \IPL \approx
  -\frac{1}{2} \sum_{i<j} w_i w_j
  (\epsilon_i^2 + \epsilon_j^2 - 2\epsilon_i \epsilon_j)
  = -\frac{1}{2} \sum_{i<j} w_i w_j (\epsilon_i - \epsilon_j)^2
\]
The expectation under zero-mean returns is
$E[|\IPL|] \approx H_w / 2$ where $H_w = \sum_{i<j} w_i w_j \sigma_{ij}^2$.%
\qedhere
\end{proof}
\cost*
\begin{proof}
    By Lemma~\ref{prop:ipl}, $E[|\IPL|] \approx H_w / 2$.  The
break-even condition~\eqref{eq:breakeven} therefore gives equilibrium pool depth
\[
  V = \frac{2 f E[Q]}{H_w}.
\]
By Lemma~\ref{prop:priceimpact}, the unit trading cost for a pair $(i,j)$
trade of size $\Delta$ is
\[
  c_{ij}(\Delta, f)
  = \frac{w_i + w_j}{w_i w_j} \frac{\Delta}{2V} + f
  = \frac{w_i + w_j}{w_i w_j} \frac{H_w \Delta}{4 f E[Q]} + f.
\]
Aggregating over all pairs, weighted by their respective volumes $Q_{ij}$:
\[
  c(\Delta, f)
  = \frac{H_w \Delta}{4 f E[Q]}
    \sum_{i < j} Q_{ij} \frac{w_i + w_j}{w_i w_j} + f Q.
\]
We simplify $\sum_{i<j} Q_{ij}(w_i + w_j)/(w_i w_j)
= \sum_{i<j} Q_{ij}/w_j + \sum_{i<j} Q_{ij}/w_i$.
Swapping $i \leftrightarrow j$ in the second sum and using $Q_{ij} = Q_{ji}$,
both sums combine to give $\sum_i Q_i / w_i$, where
$Q_i = \sum_{j \neq i} Q_{ij}$ is the total volume involving currency $i$.
Hence
\begin{equation}\label{eq:exantecost}
     c(\Delta, f) = \frac{H_w \Delta}{4 f E[Q]} \sum_i \frac{Q_i}{w_i} + f Q.
\end{equation}
In expectation, minimizing over $f$ (the AM-GM inequality implies that 
$A/f + Bf \geq 2\sqrt{AB}$ for $f > 0$, and equality here holds when $f = \sqrt{A/B}$) 
yields
\[
f^* = \sqrt{\frac{H_w \Delta}{4 (E[Q])^2}
      \sum_i \frac{E[Q_i]}{w_i}},
    \qquad
    c^* = \sqrt{\Delta H_w \sum_i \frac{E[Q_i]}{w_i}} \hfill \qedhere
\]
\end{proof}

\subsection{Proofs of parametrized optimums.}
\label{app:param}
\sym*
\begin{proof}
    With $w_1 = w_2 = (1-w_0)/2$:
\[
  H_w
  = w_1 w_2 \sigma_{12}^2 + w_0 w_1 \sigma_{01}^2 + w_0 w_2 \sigma_{02}^2
  = \tfrac{1}{4}(1-w_0)^2 s^2 \sigma^2 + (1-w_0) w_0 \sigma^2.
\]
The volume-weighted sum is
$\sum_i E[Q_i]/w_i = 4q(1+v)/(1-w_0) + 2q/w_0$.
After substituting into Equation~\eqref{eq:aggregatecost} and simplifying, the expected aggregate cost $c(w_0;s,v)$ satisfies
\[
  \frac{c(w_0;s,v)^2}{2 q \sigma^2 \Delta}
  = \left(\frac{s^2 v}{2} + 1\right) + P(w_0; s, v),
\]
where
\[
  P(w_0; s, v)
  = w_0 \!\left(1 - \tfrac{s^2}{4}\right)(1+2v)
    + \frac{s^2}{4 w_0}.
\]
Here, $c$ is monotonically increasing in $P$, so we solve the equivalent problem of minimizing $P$.
Since $d^2P/dw_0^2 = s^2 / (2 w_0^3) > 0$ for all $w_0 > 0$, $P$ is strictly convex in $w_0$ for that region.
We have the form $A/w_0 + Bw_0$, which has minimum $2\sqrt{AB}$ attained at $w_0=\sqrt{A/B}$,
yielding
\[
w_0^* = \frac{s}{\sqrt{(1+2v)(4-s^2)}}, \qquad P(w_0^*;s,v) = \frac{s}{2}\sqrt{(4-s^2)(1+2v)},
\]
When $s^2 < 2(1+2v)/(1+v)$, $w_0^* < 1$ is a valid interior solution with
\[
  c^* = \sqrt{q \sigma^2 \Delta}
    \sqrt{s^2 v + 2 + s\sqrt{(4-s^2)(1+2v)}}.
\]
For $s \in [0,2]$ and $v>0$, these quantities are well-defined. 

When $s^2 \geq 2(1+2v)/(1+v)$, the unconstrained minimizer is $w_0^* \geq 1$, and the convexity of $P(w_0)$ implies that the constrained infimum is attained as $w_0 \rightarrow 1$. \qedhere
\end{proof}
\nsym*
To prove the proposition, we first establish the following lemma on the admissible range of $s$.
{\setlength{\topsep}{0pt}
\begin{lemma}\label{lemma:parameter_range}
    Under the above parametrization, $s \in \left[0,\sqrt{\frac{2N}{N-1}}\right]$.
\end{lemma}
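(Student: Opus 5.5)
The upper bound on $s$ comes from positive semidefiniteness of the covariance matrix of the excess returns $\epsilon_1,\dots,\epsilon_N$ (with $\epsilon_0\equiv 0$ for the vehicle), written out under the symmetric parametrization. The lower bound $s\ge 0$ holds by definition, since $\sigma_{ij}^2 = s^2\sigma^2 \ge 0$. First I translate the relative variances into covariances. Because $\epsilon_0 = 0$, $\sigma_{0i}^2 = \mathrm{Var}(\epsilon_i) = \sigma^2$ for every $i\ge 1$. For $i\neq j$ both nonzero,
\[
s^2\sigma^2 = \sigma_{ij}^2 = \mathrm{Var}(\epsilon_i)+\mathrm{Var}(\epsilon_j)-2\,\mathrm{Cov}(\epsilon_i,\epsilon_j),
\qquad\text{so}\qquad
\mathrm{Cov}(\epsilon_i,\epsilon_j) = \sigma^2\left(1-\tfrac{s^2}{2}\right).
\]
Every off-diagonal covariance is therefore the same. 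The covariance matrix of $(\epsilon_1,\ldots,\epsilon_N)$ is the compound-symmetric matrix $\sigma^2\big[(1-\rho)I + \rho\,\mathbf{1}\mathbf{1}^T\big]$ with common correlation $\rho = 1 - s^2/2$.

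Next I apply the PSD condition through a single variance computation. Take the test vector $\mathbf{1}$:
\[
0 \le \mathrm{Var}\Big(\sum_{i=1}^N \epsilon_i\Big) = N\sigma^2 + N(N-1)\sigma^2\left(1-\tfrac{s^2}{2}\right).
\]
Dividing by $N\sigma^2$ (assuming $\sigma>0$; if $\sigma=0$ every quantity vanishes and $s$ is immaterial) gives $1 + (N-1)(1 - s^2/2) \ge 0$. This is equivalent to $s^2 \le 2N/(N-1)$, i.e. $s \le \sqrt{2N/(N-1)}$.

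Finally, I would check that this is the only binding constraint and that the interval is not just necessary but attained. The eigenvalues of the matrix are $\sigma^2(1-\rho)$, with multiplicity $N-1$, and $\sigma^2(1+(N-1)\rho)$. The first equals $\sigma^2 s^2/2 \ge 0$ automatically. The second is exactly the condition derived above. Hence every $s$ in $[0,\sqrt{2N/(N-1)}]$ is realized by some valid covariance matrix.

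The main obstacle is bookkeeping rather than mathematics. It is correctly using $\epsilon_0\equiv 0$ so that $\sigma_{0i}^2$ is the individual variance, and making sure the $N=2$ case reproduces the footnoted bound $s\in[0,2]$ as a consistency check.
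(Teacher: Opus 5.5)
Your proposal is correct and follows essentially the same route as the paper: both reduce the bound to positive semidefiniteness of the compound-symmetric covariance matrix with common correlation $\rho = 1 - s^2/2$, and the binding condition $1+(N-1)\rho \ge 0$ gives $s^2 \le 2N/(N-1)$. The only difference is minor. You obtain necessity directly from $\mathrm{Var}\bigl(\sum_i \epsilon_i\bigr) \ge 0$ and use the eigenvalue computation only to show the bound is attained, whereas the paper uses the eigen-decomposition of the correlation matrix for both directions at once.
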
 
}
\begin{proof}
    Let $C$ be the $N \times N$ covariance matrix of the returns of the $N$ symmetric currencies, which has $\sigma^2$ on the diagonal and $\sigma^2 - \frac{s^2\sigma^2}{2}$ elsewhere. The variance of any linear combination of the currencies must be non-negative, so $R$ must be positive semi-definite (PSD). For simplicity, we will work with the correlation matrix $R$ which satisfies $\sigma^2 R = C$ and must also be PSD.
    
    We can write $R = (1-\rho)I + \rho uu^T$, where $I$ is the $N \times N$ identity matrix and $u^T = (1,\ldots,1)$. Then, the eigenvectors and eigenvalues of $R$ satisfy
    \[
    Rv = (1-\rho)v + \rho \langle u,v\rangle u= \lambda v \iff (\lambda + \rho - 1)v = \rho \langle u,v \rangle u.
    \]
    which implies that, either both sides are $0$, which corresponds to eigenvalue $\lambda_1 = 1-\rho$, or $v = au$ for some $a \in \mathbb{R}\setminus \{0\}$ and $\lambda + \rho - 1 = \rho N$,
    for which we have eigenvalue $\lambda_2 = \rho(N-1)+1$. Thus $R$ is positive semidefinite if and only if
    \[
    1 - \rho \ge 0 \quad \text{and} \quad 1 + (N-1)\rho \ge 0 \iff -\frac{1}{N-1} \leq 1-\frac{s^2}{2} \leq 1.
    \]
    Rearranging, we have $s \leq \sqrt{\frac{2N}{N-1}}$. \qedhere
\end{proof}
Now, assuming $s \in [0,\sqrt{\frac{2N}{N-1}}]$, we prove the proposition.
\begin{proof}
    Now, in the $N$ symmetric currency AMM, we have $w_i = (1-w_0)/N$, implying
\[
  H_w
  = \sum_{i > 0} \frac{w_0(1\!-\!w_0)}{N} \sigma^2 + \sum_{0<i<j} \frac{(1\!-\!w_0)^2}{N^2} s^2 \sigma^2 
  = N\frac{w_0(1\!-\!w_0)}{N} \sigma^2 + \binom{N}{2} \frac{(1\!-\!w_0)^2}{N^2} s^2 \sigma^2.
\]
This simplifies to
\[
  H_w = w_0(1\!-\!w_0)\sigma^2 + \frac{N\!-\!1}{2N} (1\!-\!w_0)^2 s^2 \sigma^2
\]
The volume-weighted sum is
\[
\sum_i \frac{E[Q_i]}{w_i} = \frac{Nq}{w_0} + \frac{N(q + vq(N\!-\!1))}{(1\!-\!w_0)/N} = Nq\left(\frac{(1\!-\!w_0) + w_0 \cdot N(1+v(N-1))}{w_0(1\!-\!w_0)}\right)
\]
Substituting, the expected aggregate cost is
\begin{equation}\label{eq:paramcostw0}
    c(\Delta,f)
  = \sqrt{q \sigma^2 N \Delta} \sqrt{\left(w_0 + \frac{N-1}{2N} (1-w_0) s^2\right)\left(\frac{(1-w_0) + w_0 \cdot N(1+v(N-1))}{w_0}\right)}
\end{equation}
and, in optimum, satisfies
{%
\setlength{\belowdisplayskip}{8pt}
\setlength{\belowdisplayshortskip}{8pt}
\begin{flalign*}
  \frac{(c^*)^2}{q \sigma^2 N \Delta} &= (1\!-\!w_0)+ w_0 N(1\!+\!v(N\!-\!1)) + \frac{N\!-\!1}{2Nw_0}(1\!-\!w_0)^2s^2 + \frac{N\!-\!1}{2} (1\!-\!w_0)s^2 (1\!+\!v(N\!-\!1)) && \\
  & = 1+\frac{s^2}{2N}(N-1)(N(1+v(N-1))-2) + P(w_0;s,v) &&
\end{flalign*}
}%
{%
\setlength{\belowdisplayskip}{5pt}%
\setlength{\belowdisplayshortskip}{5pt}%
where
\[
    P(w_0; s, v) = w_0(N-1)(1+Nv)\frac{1}{2N}(2N-(N-1)s^2) + \frac{(N-1)s^2}{2Nw_0}
\]
}%
Again, $c$ is strictly monotonically increasing in $P$, we minimize $P$ instead.
Since $d^2P/dw_0^2 = (N-1)s^2 / w_0^3 > 0$ for all $w_0 > 0$, $P$ is strictly convex in $w_0$ for that region.
Once again, we have the form $A/w_0 + Bw_0$, which attains its minimum, $2\sqrt{AB}$ at $w_0 = \sqrt{A/B}$, yielding
\[
w_0^* = \frac{s}{\sqrt{(1+Nv)(2N-(N-1)s^2)}}, \;\;\; P(w_0^*; s, v) = \frac{s(N-1)}{N}\sqrt{(1+Nv)(2N-(N-1)s^2)},
\]
When $s^2 < 2(1+2v)/(1+v)$, $w_0^* < 1$ is a valid interior solution with
\begin{flalign}\label{eq:paramcost}
  c^*\!=\!\sqrt{q \sigma^2 \Delta}\sqrt{N\!+\!\frac{s^2}{2}(N\!-\!1)(N(1\!+\!v(N\!-\!1))\!-\!2)+\!s(N\!-\!1)\!\sqrt{(1\!+\!Nv)(2N\!-\!(N\!-\!1)s^2)}}  &&
\end{flalign}
These quantities are well-defined for $v>0$ and the range of $s$ derived in Lemma~\ref{lemma:parameter_range}. 

When $s^2 \geq  2(1+Nv)/(1+(N-1)v)$, the unconstrained minimizer is $w_0^* \geq 1$, and the convexity of $P(w_0)$ implies that the constrained infimum is attained as $w_0 \rightarrow 1$. \qedhere
\end{proof}

\subsection{Proof of tractable weight optimization}
\label{app:approx}
\approxi*
\begin{proof}
    We minimize $\tilde{\mathcal{F}}(w) = \sum_i \frac{E[Q_i]}{w_i H_i}$ subject to
$\sum_i w_i = 1$, $w_i > 0$, where $H_i = \sum_{j\neq i} \sigma_{ij}^2$, via the Lagrangian
\[
  \mathcal{L}(w, \lambda)
  = \sum_i \frac{E[Q_i]}{w_i H_i} + \lambda \!\left(\sum_i w_i - 1\right).
\]
The first-order conditions $\nabla_w \mathcal{L}(w,\lambda) = 0$ gives
$\frac{E[Q_i]}{w_i^2 H_i} = \lambda$, so $w_i = \sqrt{\frac{E[Q_i]}{\lambda H_i}}$ for all $i$.
Imposing $\sum_i w_i = 1$ through the condition $\nabla_\lambda \mathcal{L}(w,\lambda) = 0$:
\[
  \frac{1}{\sqrt{\lambda}} \sum_j \sqrt{E[Q_j]/H_j} = 1
  \implies \sqrt{\lambda} = \sum_j \sqrt{E[Q_j]/H_j} \equiv D, \implies w_i^* = D^{-1}\sqrt{E[Q_i]/H_i}
\]
The hessian $\nabla_{ww}^2 \mathcal{L}(w,\lambda)$ has 
$\frac{\partial^2 \mathcal{L}}{\partial w_i^2} = \frac{2E[Q_i]}{w_i^3 H_i}$ on the diagonal
and $\frac{\partial^2 \mathcal{L}}{\partial w_i \partial w_j} = 0$ elsewhere, so the second-order 
sufficient condition, $v^T \nabla_{w w}^2 \mathcal{L}(w,\lambda) v > 0$ 
for all $v \neq 0$, becomes
\[
\sum_i v_i^2 \frac{\partial^2 \mathcal{L}}{\partial w_i^2} = \sum_i v_i^2 \frac{2E[Q_i]}{w_i^3 H_i} > 0,
\]
which holds when all $w_i>0$, so this a global minimum for $w_i>0$. \qedhere
\end{proof}

\iffalse

{
\renewcommand{\thesection}{B--D}
\section{Full Version Material}

\makeatletter
\protected@edef\@currentlabel{B}
\phantomsection
\label{app:thresholds}

\protected@edef\@currentlabel{C}
\phantomsection
\label{app:systemcost}

\protected@edef\@currentlabel{D}
\phantomsection
\label{app:greedy}
\makeatother
}

Appendices~\ref{app:thresholds},~\ref{app:systemcost}, and~\ref{app:greedy} are omitted here due to space limitations. They can be found in their respective Appendix sections in the full version of this paper~\cite{full}.

\else

\section{Thresholds for Parametrized Case}
\label{app:thresholds}

\begin{proposition}
    Under the symmetric $(v,s)$ parametrization for three currencies,
\begin{itemize}
    \item $c^* \leq c^\text{SQ}$ whenever $s < \sqrt{\frac{2(1+2v)}{1+v}}$, a threshold equal to $\sqrt{2}$ at $v=0$ and increasing to $2$ as $v\to\infty$.
    \item $c^* < c^\text{BP}$ for all $v>0, s\in\left[0,\sqrt{\frac{2(1+2v)}{1+v}}\right]$. Additionally, $c^\text{BP} < c^\text{SQ}$ whenever $s < \frac{2(\sqrt{1+v}-1)}{\sqrt{v}}$, a threshold equal to $0$ at $v\to 0^+$ and increasing to $2$ as $v\to\infty$.
    \item $c^* \leq c^\text{SYM}$ for all $v>0, s\in[0,2]$, with equality only when $s=v=1$. Additionally, $c^\text{SYM} < c^\text{SQ}$ whenever $s < \sqrt{\frac{2(1+2v)}{2+v}}$, a threshold equal to $1$ at $v=0$ and increasing to $2$ as $v\to\infty$.
\end{itemize}
\end{proposition}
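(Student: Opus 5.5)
Throughout I normalize every cost by $\sqrt{q\sigma^2\Delta}$ and first rewrite the four costs in the $(v,s)$ parametrization. The expected volumes are $E[Q_0]=2q$, $E[Q_1]=E[Q_2]=q(1+v)$ and $E[Q]=q(2+v)$. Since the numeraire has $x_0\equiv 1$, we have $\sigma_0^2=0$ and $\sigma_i^2=\sigma_{0i}^2=\sigma^2$ for $i=1,2$.

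Proposition~\ref{prop:sqcost} then gives $c^\text{SQ}=2\sqrt{1+v}$, because the vehicle term vanishes. The bilateral formula gives $c^\text{BP}=2+s\sqrt v$. Proposition~\ref{prop:symcost} with $H=\sigma^2(2+s^2)$ and $N=2$ gives $c^\text{SYM}=\sqrt{2(2+s^2)(2+v)/3}$. Finally, $c^*$ is the expression in Proposition~\ref{prop:sym}, valid in the interior regime $s^2<2(1+2v)/(1+v)$.

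The key structural observation is that the proof of Proposition~\ref{prop:sym} gives the whole curve
\[
c(w_0)^2/(q\sigma^2\Delta)=s^2v+2+2P(w_0;s,v),
\]
and $c^*$ is its minimum or infimum. This lets me compare $c^*$ against the cost at particular weights rather than comparing closed forms.

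\textbf{First bullet.} I evaluate the curve at the boundary $w_0\to1$. There $P(1)=(1-s^2/4)(1+2v)+s^2/4$, and the $s^2$ terms cancel, leaving $c(1)^2=4(1+v)$, that is, $c(1)=c^\text{SQ}$. Hence $c^*\le c^\text{SQ}$ always. In the interior regime, strict convexity of $P$ together with $w_0^*<1$ gives strict inequality. The behaviour of the threshold $\sqrt{2(1+2v)/(1+v)}$ (value $\sqrt2$ at $v=0$, increasing, limit $2$) follows by differentiating $(1+2v)/(1+v)$, whose derivative is $1/(1+v)^2>0$.

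\textbf{Third bullet.} Equal weights are the point $w_0=1/3$ on the curve, so $c^*\le c^\text{SYM}$ by minimality. Equality requires $w_0^*=1/3$, i.e.\ $9s^2=(1+2v)(4-s^2)$; the point $s=v=1$ satisfies this. I will need to check carefully whether this is the only equality case claimed, or whether equality holds along this whole curve. I suspect the latter, in which case I would restate the equality condition accordingly. The comparison $c^\text{SYM}<c^\text{SQ}$ reduces after squaring to $2(2+s^2)(2+v)<12(1+v)$, i.e.\ $s^2<2(1+2v)/(2+v)$. The limits of this threshold are again elementary.

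\textbf{Second bullet.} The sub-claim $c^\text{BP}<c^\text{SQ}$ is linear in $s$: $2+s\sqrt v<2\sqrt{1+v}$ gives the stated threshold directly, and its limits follow by rationalizing to $2\sqrt v/(\sqrt{1+v}+1)$.

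The main obstacle is $c^*<c^\text{BP}$. After squaring, this is equivalent to
\[
s\sqrt{(4-s^2)(1+2v)}<2+4s\sqrt v .
\]
My first attempt is to avoid the closed form altogether: exhibit a specific admissible $w_0$ (for example the vehicle-weight analogue of bilateral depth, or $w_0=1/2$) whose curve value is already below $(2+s\sqrt v)^2$. Failing that, I would square once more, with both sides nonnegative, to obtain a polynomial inequality in $(s,\sqrt v)$ and verify it on the admissible rectangle $s\in[0,2]$, $v>0$. For the latter I would treat it as a quadratic in $\sqrt v$ and show its discriminant is negative, or that its minimum over $\sqrt v\ge0$ is positive.
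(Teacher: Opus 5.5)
Your plan follows the paper's proof step for step. The paper also gets the first bullet from $c(w_0)\to c^{\text{SQ}}$ as $w_0\to1$, gets $c^*\le c^{\text{SYM}}$ from minimality over weights (equal weights being $w_0=1/3$), and gets $c^*<c^{\text{BP}}$ by squaring $s\sqrt{(4-s^2)(1+2v)}<2+4s\sqrt v$ a second time. Your version of the first bullet ($c^*\le c^{\text{SQ}}$ always, strictly in the interior regime) is more accurate than the paper's ``if and only if''.

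\textbf{The equality clause.} Your suspicion is correct: ``equality only when $s=v=1$'' is false, and the paper's proof never addresses it, since it only invokes minimality.
\begin{itemize}
\item Since $c(1/3)=c^{\text{SYM}}$ and $P$ is strictly convex for $s>0$, equality holds exactly when $w_0^*=1/3$.
\item That means $9s^2=(1+2v)(4-s^2)$, i.e.\ $s^2=2(1+2v)/(5+v)$, a curve lying inside the interior regime.
\item For $s=0$, or for $s^2\ge 2(1+2v)/(1+v)$, $P$ is strictly monotone on $(0,1)$, so the inequality is strict there.
\item Example: $(v,s)=(4,\sqrt2)$ gives $w_0^*=1/3$ and $(c^*)^2=(c^{\text{SYM}})^2=16\,q\sigma^2\Delta$.
\end{itemize}
What you can actually prove is the statement with this curve as the equality set.

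\textbf{The bound $c^*<c^{\text{BP}}$.} Two parts of your plan would fail as written.
\begin{itemize}
\item No fixed $w_0$ works as a witness. At $s=0$ the curve gives $c(w_0)^2/(q\sigma^2\Delta)=2+2w_0(1+2v)$, which exceeds $(c^{\text{BP}})^2/(q\sigma^2\Delta)=4$ once $w_0(1+2v)>1$.
\item The discriminant of the quadratic in $t=\sqrt v$ is not negative in general. At $s=\sqrt2$ the quadratic is $24t^2+16\sqrt2\,t$, which has positive discriminant.
\end{itemize}
What closes the argument, as in the paper, is that the squared difference equals $(s^2-2)^2+16st+8s^2t^2+2s^4t^2$. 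All its coefficients in $t$ are nonnegative. It is strictly positive for $t>0$: through $16st$ when $s>0$, and through the constant $4$ when $s=0$.

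Its minimum over $t\ge0$ is $0$ at $s=\sqrt2$, so the hypothesis $v>0$ is genuinely used. Indeed, at $v=0$, $s=\sqrt2$ one has $c^*=c^{\text{BP}}$.
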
 

\begin{proof}
    Plugging $\sigma, s, q, v$ into $c^\text{SQ}$, $c^\text{BP}$, and $c^\text{SYM}$ gives
\[
c^\text{SQ}(s,v) = \sqrt{q\sigma^2 \Delta }\sqrt{4(1+v)}, \qquad
c^\text{BP}(s,v) = \sqrt{ q \sigma^2 \Delta} (2+s\sqrt{v})
\]
\[
c^\text{SYM}(s,v) = \sqrt{q \sigma^2 \Delta} \sqrt{\tfrac{2}{3}(2+s^2)(2+v)}.
\]
\begin{itemize}
    \item For a given $s,v$, when $w_0 \rightarrow 1$, the expected aggregate cost approaches 
    \[
    c(w_0;s,v) \rightarrow \sqrt{2 q \sigma^2 \Delta}\sqrt{\frac{s^2 v}{2}+1 + P(1;s,v)} = \sqrt{q \sigma^2 \Delta} \sqrt{4(1+v)} = c^\text{SQ}(s,v).
    \]
    This means that $c^* \leq c^\text{SQ}$ if and only if $c(w_0;s,v)$ has a minimum in $(0,1)$. That is, whenever $0 <w_0^* = \frac{s}{\sqrt{(1+2v)(4-s^2)}} < 1$, or $s^2 < \frac{2(1 + 2v)}{1+v}$.\\
    \item $c^*$ is well-defined when $s < \sqrt{\frac{2(1+2v}{1+v}}$ and
    $c^* < c^\text{BP}$ whenever 
    \[
    s^2v + 2 + s\sqrt{(4-s^2)(1+2v)} < (2+s\sqrt{v})^2 = 4 + 4s\sqrt{v} + s^2 v
    \]
    Rearranging, we get
    \[
    s\sqrt{(4-s^2)(1+2v)} < 2 + 4s\sqrt{v}
    \]
    Squaring both sides and expanding,
    \[
    4s^2+ 8vs^2 - s^4 - 2s^4v < 4 + 16s\sqrt{v} + 16s^2 v
    \]
    Bringing everything to the right-hand side,
    \[
    16s\sqrt{v} + 8s^2 v + 2s^4 v + s^4 - 4s^2 + 4 = 16s\sqrt{v} + 8s^2 v + 2s^4 v + (s^2-2)^2 > 0
    \]
    Each term is positive for all $v >0$ and $s\in\left[0,\sqrt{\frac{2(1+2v)}{1+v}}\right]$, so the inequality holds for all valid $v,s$. 
    The threshold for $c^\text{BP} < c^\text{SQ}$ follows directly from the definitions of $c^\text{BP}$ and $c^\text{SQ}$ by isolating $s$.\\
    \item The inequality holds because $c^*$ is the minimum over all valid weights. The threshold for $c^\text{SYM} < c^\text{SQ}$ follows directly from the definitions of $c^\text{SYM}$ and $c^\text{SQ}$ by isolating $s$. \qedhere
\end{itemize}
\end{proof}
\begin{proposition}
Under the symmetric $(v,s)$ parametrization for $N+1$ currencies,
\begin{itemize}
  \item $c^* \leq c^\text{SQ}$ whenever $s < \sqrt{\frac{2(1 + Nv)}{1+(N-1)v}}$. This threshold is $\sqrt{2}$ when $v = 0$ and increases to $\sqrt{\frac{2N}{N-1}}$ as $v \rightarrow \infty$.
  \item  $c^\text{BP} < c^\text{SQ}$ whenever $s < \frac{2(\sqrt{1+(N-1)v}-1)}{(N-1)\sqrt{v}}$, which is $0$ at $v \rightarrow 0^+$ and approaches $\frac{2}{\sqrt{N-1}}$ as $v \rightarrow \infty$.
  \item $c^* \leq c^\text{SYM}$ follows from the weight optimization problem, while $c^\text{SYM} < c^\text{SQ}$ whenever $s \leq \sqrt{\frac{2(1+Nv)}{2+(N-1)v}}$, which is $1$ when $v=0$ and approaches $\sqrt{\frac{2N}{N-1}}$ as $v \rightarrow \infty$.
\end{itemize}    
\end{proposition}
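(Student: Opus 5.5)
The plan is to mirror the three-currency threshold proposition: specialize $c^\text{SQ}$, $c^\text{BP}$ and $c^\text{SYM}$ (Propositions~\ref{prop:sqcost}, \ref{prop:symcost} and the bilateral formula) to the $(v,s)$ parametrization, then compare each closed form with $c^\text{SQ}$ and isolate $s$.

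\textbf{Specializing the benchmarks.} The vehicle has $x_0\equiv 1$, so $\sigma_0^2=0$, and for $i>0$ we have $\sigma_i^2=\sigma_{0i}^2=\sigma^2$. The volumes are $E[Q_i]=q(1+(N-1)v)$ for $i>0$, with $\binom N2$ cross pairs of volume $vq$. This gives
\[
c^\text{SQ}=\sqrt{q\sigma^2\Delta}\,N\sqrt{1+(N-1)v},\qquad c^\text{BP}=\sqrt{q\sigma^2\Delta}\Bigl(N+\tfrac{N(N-1)}{2}s\sqrt v\Bigr).
\]
For $c^\text{SYM}$, use $H=N\sigma^2+\binom N2 s^2\sigma^2$ and $E[Q]=Nq+\binom N2 vq$, so that
\[
(c^\text{SYM})^2=\frac{q\sigma^2\Delta\, N^2(2+(N-1)s^2)(2+(N-1)v)}{2(N+1)}.
\]

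\textbf{First bullet.} Set $w_0=1$ in \eqref{eq:paramcostw0}. The first factor becomes $1$ and the second becomes $N(1+v(N-1))$, so the cost tends to $c^\text{SQ}$ as $w_0\to1$. Since $c$ is increasing in $P$ and $P$ is strictly convex on $w_0>0$, an interior minimizer $w_0^*<1$ gives $c^*<c^\text{SQ}$; otherwise the infimum is the limit $c^\text{SQ}$.

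So the first bullet reduces to the condition $w_0^*<1$, i.e. $s^2<(1+Nv)(2N-(N-1)s^2)$. Collecting the $s^2$ terms gives $s^2\bigl(1+(N-1)(1+Nv)\bigr)=s^2N(1+(N-1)v)<2N(1+Nv)$, which is the stated threshold. The limits are $\sqrt2$ at $v=0$ and $\sqrt{2N/(N-1)}$ as $v\to\infty$, the latter matching the upper end of Lemma~\ref{lemma:parameter_range}.

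\textbf{Second bullet.} Dividing by $N\sqrt{q\sigma^2\Delta}$, the inequality $c^\text{BP}<c^\text{SQ}$ reads $1+\tfrac{N-1}{2}s\sqrt v<\sqrt{1+(N-1)v}$. This is linear in $s$ and gives the threshold $2(\sqrt{1+(N-1)v}-1)/((N-1)\sqrt v)$. The limits follow because $\sqrt{1+(N-1)v}-1\sim (N-1)v/2$ as $v\to0$ and $\sim\sqrt{(N-1)v}$ as $v\to\infty$.

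\textbf{Third bullet.} The claim $c^*\le c^\text{SYM}$ is immediate, since equal weights are feasible in the minimization defining $c^*$. For $c^\text{SYM}<c^\text{SQ}$, compare squares: $(2+(N-1)s^2)(2+(N-1)v)<2(N+1)(1+(N-1)v)$. Isolating $s^2$, the numerator simplifies to $2(N-1)(1+Nv)$, giving $s^2<2(1+Nv)/(2+(N-1)v)$. The endpoint values are then read off.

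\textbf{Main obstacle.} The only delicate step is the first bullet's equivalence between ``interior minimizer'' and ``strictly beats $c^\text{SQ}$''. It must be justified from strict convexity of $P$ together with continuity of \eqref{eq:paramcostw0} at $w_0=1$, where the $(1-w_0)$ factors vanish. This is handled by the exact limit computed above rather than by the $P$ decomposition, which divides by $w_0$ but is otherwise fine at $w_0=1$. The rest is routine algebra.
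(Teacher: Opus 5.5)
Your proposal is correct and follows the paper's route. You specialize the three benchmark costs, reduce the first bullet to $w_0^*<1$ via the $w_0\to1$ limit of \eqref{eq:paramcostw0} plus convexity of $P$, and isolate $s$ in the other two comparisons; all of your thresholds and limits check out.

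Two small remarks. First, at $s=0$ the function $P$ is linear rather than strictly convex and the infimum is approached as $w_0\to0$, so your ``otherwise the infimum is $c^\text{SQ}$'' dichotomy does not cover that degenerate case, although the bullet still holds there trivially. Second, your third-bullet derivation correctly gives the strict condition $s^2<2(1+Nv)/(2+(N-1)v)$; at equality $c^\text{SYM}=c^\text{SQ}$, so the statement's ``$\leq$'' should really be ``$<$''.
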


\begin{proof}
Plugging $N, \sigma, s, q, v$ into $c^\text{SQ}$, $c^\text{BP}$, and $c^\text{SYM}$ gives
\[
c^\text{SQ}(s,v) = \sqrt{q\sigma^2 \Delta }\sqrt{N^2(1+v(N-1))}, \quad
c^\text{BP}(s,v) = \sqrt{ q \sigma^2 \Delta} \!\left(N+s\sqrt{v}\tbinom{N}{2}\right)
\]
\[
c^\text{SYM}(s,v) = \sqrt{q \sigma^2 \Delta} \sqrt{\tfrac{2}{N+1}\!\left(N+s^2\tbinom{N}{2}\right)\!\left(N+v\tbinom{N}{2}\right)}.
\]
\begin{itemize}
    \item For a given $s,v$, when $w_0 \rightarrow 1$, the expected aggregate cost in Equation~\eqref{eq:paramcostw0}
    
    \hspace*{-\leftmargin}
    \begin{minipage}{\dimexpr\linewidth\relax}
    \begin{flalign*}
        c(w_0;s,v)\!\rightarrow\!\sqrt{N q \sigma^2 \Delta} \sqrt{(1\!+\!0)\!\left(\frac{0\!+\!1\!\cdot\! N(1\!+v(N\!-\!1))}{1}\right)} & = \sqrt{q \sigma^2 \Delta} \sqrt{N^2(1\!+\!v(N\!-\!1))} \\
        &= c^\text{SQ}(s,v)
    \end{flalign*}
    \end{minipage}
    
    This means that $c^* \leq c^\text{SQ}$ if and only if $c(w_0;s,v)$ has a minimum in $(0,1)$. That is, whenever $0 <w_0^* = \frac{s}{\sqrt{(1+Nv)(2N-(N-1)s^2)}} < 1$, or $s^2 < \frac{2(1 + Nv)}{1+(N-1)v}$.

    \item The thresholds for $c^\text{BP} < c^\text{SQ}$ and $c^\text{SYM} < c^\text{SQ}$ follow directly from the definitions of $c^\text{BP}$, $c^\text{SQ}$, and $c^\text{SYM}$ by isolating $s$. \qedhere
\end{itemize}
\end{proof}

Each of the thresholds derived above approach a limit as $v \rightarrow \infty$. These limits vary with $N$ and are plotted in Figure~\ref{fig:thresholds}.

\begin{figure}
    \centering
    \includegraphics[width=0.5\linewidth]{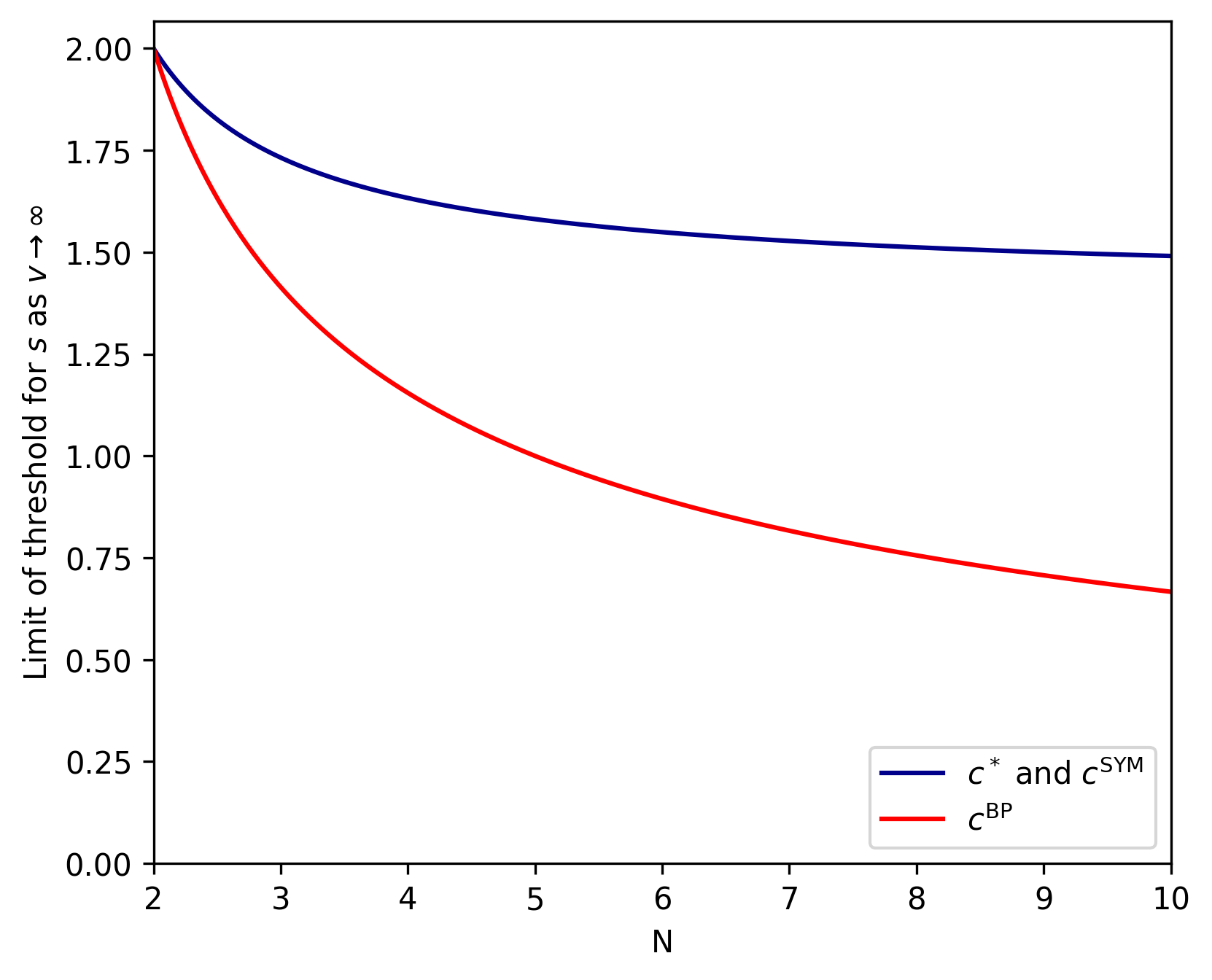}
    \caption{Threshold limits as $v \rightarrow \infty$ for different values of $N$.}
    \label{fig:thresholds}
\end{figure}

\section{System Cost Under Currency Partition}\label{app:systemcost}

Consider a pool $I = G \cup \{0\}$.
Since $i \in G$ trades all of its volume through the pool $I$ (either directly to another pool member, or to an external currency routed through currency $0$), the effective pool volumes are $Q^I_i = Q_i$ for all $i \in G$ and $Q_0^I = \sum_{i \in G} (Q_{0i} + \sum_{j \not\in I} Q_{ij})$, the sum of the trades between $0$ and pool members and the trades from a pool member to an external currency, routed through $0$.
The cost (Proposition~\ref{prop:cost}) for pool $I$ is, under ex-ante expectations,
\begin{equation*}
  c(I) = \sqrt{H_w^I \Delta \sum_{i \in I} \frac{E[Q^I_i]}{w_i}},
  \qquad
  H_w^I = \sum_{\substack{i < j,\; i,j \in I}} w_i w_j \sigma_{ij}^2,
\end{equation*}
Each unpooled currency $i \in O$ routes all its trade volume through a bilateral pool with currency $0$, incurring an aggregate cost (Proposition~\ref{prop:sqcost}) of $c^\text{SQ}(O) = \sum_{i \in O} \sqrt{\sigma_i^2 E[Q_i] \Delta}$. The total expected and realized (for quarter $t$) system costs are then, respectively,
\begin{equation}\label{eq:systemcost}
  c(M) = c^\text{SQ}(O) + \sum_{I \in M} c(I), \qquad c(M,t) = c^\text{SQ}(O,t) + \sum_{I \in M} c(I,t)
\end{equation}
where, for the minimizing fee $f_I^*$ that results in cost $c(I)$, weights $w_I^* = \arg\min_{w_I} c(I)$, and liquidity $V_I^*$ satisfying the break-even condition~\eqref{eq:breakeven},
\[
  c(I,t)=\frac{\Delta}{2V_I^*}\sum_{i\in I}\frac{Q^I_i(t)}{{w_I}_{i}^*}+f_I^*Q^I(t), \qquad c^\text{SQ}(O,t) = \sum_{i \in O} \sqrt{\frac{\sigma_i^2\Delta}{E[Q_i(t)]}} \cdot Q_i(t)
\]
with $Q^I(t)=Q_0^I(t)+\sum_{\substack{i < j,\; i,j \in G}} Q_{ij}(t)$, the total realized trade volume in $I$ in quarter $t$. 
Similarly, the realized status quo cost in quarter $t$, $c^\text{SQ}(A,t)$ where $A$ is the set of all currencies. Throughout the paper, we sometimes drop the $t$
and use the notation $c(M)$ for both expected and realized costs, but specify which we refer to.

\section{Benchmark Greedy Algorithm}\label{app:greedy}

We compare the HAC algorithm with a greedy algorithm. This greedy algorithm scores each combination (group) of up to 6 non-vehicle currencies using the ratio $c^*/c^\text{SQ}$, where $c^*$ is the cost of trades within the group in a multi-asset pool and $c^\text{SQ}$ is the status quo cost of the same trades. The algorithm then selects groups in the order of descending scores while ensuring that every currency is in at most one selected group. The results using the empirical data described in Section~\ref{subsec:data} are the following pools (each also containing USD):

\smallskip

\noindent [CHF, CZK, EUR, GBP, RSD, SEK], [AUD, NZD], [JPY, KRW, SGD, THB], [CDF, TZS], [ALL, ISK, NOK, PLN], [BIF, BRL, CLP, COP, TRY], [KES, UGX], [PYG, UYU], [MZN, ZAR], [GHS, MXN, PHP], [CAD, MUR], [INR, PEN], [AMD, IDR], [DZD, JMD], [GTQ, NIO]

\smallskip

\noindent with a cost of 16{,}640 million USD and savings of 2{,}300 million USD, or 12\%. While the savings are on par with that of the HAC algorithm (13\% savings), the greedy algorithm takes over 4 hours on the same machine that runs the HAC algorithm in 1.6 seconds.

\fi

\end{document}